\crefname{theorem}{Theorem}{Theorems}
\Crefname{theorem}{Theorem}{Theorems}
\Crefname{corollary}{Corollary}{Corollary}
\title{Quantum-inspired permanent identities}
\author{Ulysse Chabaud}
\affiliation{Institute for Quantum Information and Matter, California Institute of Technology, Pasadena, CA 91125, USA}
\email{uchabaud@caltech.edu}
\author{Abhinav Deshpande}
\affiliation{Institute for Quantum Information and Matter, California Institute of Technology, Pasadena, CA 91125, USA}
\author{Saeed Mehraban}
\affiliation{Computer Science, Tufts University, Medford, MA 02155, USA}
\email{saeed.mehraban@tufts.edu}
\begin{document}

\maketitle

\begin{abstract}
    The permanent is pivotal to both complexity theory and combinatorics. 
    In quantum computing, the permanent appears in the expression of output amplitudes of linear optical computations, such as in the Boson Sampling model.
    Taking advantage of this connection, we give quantum-inspired proofs of many existing as well as new remarkable permanent identities. Most notably, we give a quantum-inspired proof of the MacMahon master theorem as well as proofs for new generalizations of this theorem. Previous proofs of this theorem used completely different ideas.
    Beyond their purely combinatorial applications, our results demonstrate the classical hardness of exact and approximate sampling of linear optical quantum computations with input cat states. 
\end{abstract}

\tableofcontents

\section{Introduction}

The \textit{permanent} of an $m\times m$ matrix $A=(a_{ij})_{1\le i,j\le m}$ is a combinatorial function defined as~\cite{minc1984permanents,percus2012combinatorial}:
\begin{equation}
    \mathrm{Per}(A)=\sum_{\sigma\in\mathcal S_m}\prod_{k=1}^ma_{k\sigma(k)},
\end{equation}
where $\mathcal S_m$ is the symmetric group over $m$ symbols. 
While the closely related determinant can be computed efficiently using many methods such as Gaussian elimination, Valiant famously proved the $\#$\textsf P-hardness of computing the permanent~\cite{valiant1979complexity}.

Interestingly, the permanent appears in the expression of the output amplitudes of linear optical quantum computations with noninteracting bosons~\cite{caianiello1953quantum,scheel2004permanents}, as in the Boson Sampling model of quantum computation~\cite{Aaronson2013}. 
This connection has lead to several linear optical proofs of existing and new classical complexity results: computation of the permanent is $\#$\textsf P-hard~\cite{aaronson2011linear}, (inverse polynomial) multiplicative estimation of the permanent of positive semidefinite matrices is in $\textsf{BPP}^\textsf{NP}$~\cite{rahimi2015can}, multiplicative estimation of the permanent of orthogonal matrices is $\#$\textsf{P}-hard~\cite{grier2016new}, and computation of a class of multidimensional integrals is $\#$\textsf P-hard~\cite{rohde2016quantum}. It has also lead to the introduction of a quantum-inspired classical algorithm for additive estimation of the permanent of positive semidefinite matrices~\cite{chakhmakhchyan2017quantum}. Moreover, an approach inspired by quantum tomography recently showed that (subexponential) multiplicative estimation of the permanent of positive semidefinite matrices is \textsf{NP}-hard~\cite{meiburg2021inapproximability}.

Beyond its importance for complexity theory, the permanent has numerous applications for solving combinatorial problems~\cite{minc1984permanents,percus2012combinatorial} and identities for the permanent have been instrumental in these applications. For example, the MacMahon master theorem~\cite{macmahon2001combinatory}, which relates the permanent to a coefficient of the Taylor series of a determinant, is an invaluable tool for proving combinatorial identities~\cite{good1962proofs,carlitz1974application,carlitz1977some}. Similarly, Ryser’s formula~\cite{ryser1963combinatorial} and Glynn’s formula~\cite{balasubramanian1980combinatorics,bax1998finite,glynn2010permanent} are routinely used to compute the permanent more efficiently than the naive brute-force approach.

While linear optics has been used as a tool to explore the classical complexity of the permanent, previous work suggests that it can also be a useful way to obtain simple proofs of theorems about the permanent: for instance, \cite{scheel2004permanents} shows that the permanent of a unitary matrix $U$ lies in the (closed) complex unit disk by expressing $|\mathrm{Per}(U)|^2$ as an output probability of a linear optical sampling computation, while~\cite{Aaronson2013} derives simple permanent identities using various representations of the same linear optical sampling computation.
This begs the following questions:
\vspace{-0.1cm}
\begin{enumerate}
    \item Can we use linear optics to prove existing remarkable permanent identities, such as the MacMahon master theorem?
    \vspace{-0.1cm}
    \item Can we use linear optics to derive \textit{new} remarkable permanent identities?
\end{enumerate}
\vspace{-0.1cm}
\noindent In this work, we show that the answer to both questions is \textit{yes}.
\vspace{-0.1cm}
\medskip

We give quantum-inspired proofs of several important permanent identities in section~\ref{sec:qproofs}. In particular, we show that the MacMahon master theorem can be understood as two different ways of computing an inner product between two Gaussian quantum states.

We also derive new quantum-inspired identities for the permanent. Our main results are summarized in section~\ref{sec:results} and proven in section~\ref{sec:new}. These include generalizations of the MacMahon master theorem (Theorems~\ref{th:mmmt+} and~\ref{th:mmmt++}), new generating functions for the permanent (Theorem~\ref{th:gen}) and a formula for the sum of two permanents (Theorem~\ref{th:sumper}).

As a bonus, our findings also have consequences for the classical complexity of exact and approximate sampling of linear optical sampling computations with input cat states (Theorem~\ref{th:hardBScat}), which we discuss in section~\ref{sec:BScat}. We rigorously prove that the corresponding quantum probability distributions are as hard to sample as the original Boson Sampling distribution~\cite{Aaronson2013}, for all cat state amplitudes in the exact case and for small enough amplitudes in the approximate case. Until now, a formal proof was available only in the exact case~\cite{rohde2015evidence}.

\section{Background}
\label{sec:preliper}

\subsection{Notations and preliminary material}

\begin{table}[h]
    \centering
        \begin{tabular}{||c|c||}
    \hline
         \,&\,\\
         $\bm0=(0,\dots,0)$ & $\bm p\le\bm q\Leftrightarrow\forall k\in\{1,\dots,m\},\;p_k\le q_k$\\
         $\bm1=(1,\dots,1)$ & $-\bm z=(-z_1,\dots,-z_m)$\\
         $|\bm p|=p_1+\dots+p_m$ & $\bm z^*=(z_1^*,\dots,z_m^*)$\\
         $\bm p!=p_1!\dots p_m!$ & $\|\bm z\|^2=|z_1|^2+\dots+|z_m|^2$\\
         $\ket{\bm p}=\ket{p_1}\otimes\dots\otimes\ket{p_m}$ & $\bm z^{\bm p}=z_1^{p_1}\dots z_m^{p_m}$\\
         $\bm p+\bm q=(p_1+q_1,\dots,p_m+q_m)$ & $\partial_{\bm z}^{\bm p}=\partial_{z_1}^{p_1}\dots\partial_{z_m}^{p_m}$\\
         $\bm p\oplus\bm q=(p_1,\dots,p_m,q_1,\dots,q_m)$ & $d^m\bm zd^m\bm z^*=d\Re{z_1}d\Im{z_1}\dots d\Re{z_m}d\Im{z_m}$\!\\
         \,&\,\\
         \hline
        \end{tabular}
    \caption{Multi-index notations used in this paper, for $m\in\mathbb N^*$, $\bm p=(p_1,\dots,p_m)\in\mathbb N^m$, $\bm q=(q_1,\dots,q_m)\in\mathbb N^m$ and $\bm z=(z_1,\dots,z_m)\in\mathbb C^m$.}
    \label{tab:multi}
\end{table}

\noindent We use bold math for multi-index expressions (see Table~\ref{tab:multi} above). We denote by $\mathbb T=\{z\in\mathbb C,|z|=1\}$ the complex unit circle. For all $m\in\mathbb N^*$, all $\bm z=(z_1,\dots,z_m)\in\mathbb C^m$ and all $\bm p=(p_1,\dots,p_m)\in\mathbb N^m$, we use the notation $[\bm z^{\bm p}]$ to denote the coefficient of $\bm z^{\bm p}=z_1^{p_1}\dots z_m^{p_m}$ in an analytic expression. 
For all $m\in\mathbb N^*$, all $\bm p=(p_1,\dots,p_m)\in\mathbb N^m$ and all $\bm q=(q_1,\dots,q_m)\in\mathbb N^m$, we denote by $A_{\bm p,\bm q}$ the matrix obtained from an $m\times m$ matrix $A$ by first repeating its $i^{th}$ row $p_i$ times (deleting the row if $p_i=0$) for all $i\in\{1,\dots,m\}$ and then repeating its $j^{th}$ column $q_j$ times (deleting the column if $q_j=0$) for all $j\in\{1,\dots,m\}$. By convention, we set the permanent of non-square matrices to $0$ and $\mathrm{Per}(A_{\bm0,\bm0})=1$.

\medskip

Let $m\in\mathbb N^*$ denote the number of modes. We denote (unnormalized) quantum states using the Dirac ket notation. These are vectors in an infinite-dimensional Hilbert space spanned by the orthonormal Fock basis
\begin{equation}
    \{\ket{\bm p}=\ket{p_1}\otimes\dots\otimes\ket{p_m}\}_{\bm p=(p_1,\dots,p_m)\in\mathbb N^m}.
\end{equation}
Hereafter, we label Fock states using $p$, $q$, $\bm p$, $\bm q$, and $\bm k$, coherent states using $\alpha$, $\beta$, $\bm\alpha$, and $\bm\beta$, and two-mode squeezed states using $\lambda$, $\mu$, $\bm\lambda$, and $\bm\mu$. In particular, coherent states are defined as
\begin{equation}\label{eq:coh}
    \ket{\alpha}=e^{-\frac12|\alpha|^2}\sum_{p\ge0}\frac{\alpha^p}{\sqrt{p!}}\ket p,
\end{equation}
for all $\alpha\in\mathbb C$, and (unnormalized) two-mode squeezed states as
\begin{equation}
    \ket{\lambda}=\sum_{p\ge0}\lambda^p\ket{pp},
\end{equation}
for all $\lambda\in\mathbb C$ with $|\lambda|<1$. Moreover, cat states are defined as
\begin{equation}\label{eq:cat}
    \ket{\mathrm{cat}_\alpha}=\frac{e^{\frac12|\alpha|^2}}{2\sqrt{\sinh(|\alpha|^2)}}(\ket{\alpha}-\ket{-\alpha}),
\end{equation}
for all $\alpha\in\mathbb C$.

\medskip

We make use of the following inner products involving these states: for all $p\in\mathbb N$, $\alpha,\beta\in\mathbb C$, and $\lambda\in\mathbb C$ with $|\lambda|<1$,
\begin{equation}\label{eq:quantum1}
    \braket{p}{\alpha}=e^{-\frac12|\alpha|^2}\frac{\alpha^p}{\sqrt{p!}}
\end{equation}
\begin{equation}\label{eq:quantum1+}
    \braket{\alpha}{\beta}=e^{-\frac12|\alpha|^2-\frac12|\beta|^2+\alpha^*\beta},
\end{equation}
\begin{equation}\label{eq:quantum2}
    \braket{pp}{\lambda}=\lambda^p,
\end{equation}
\begin{equation}\label{eq:sqcoh}
    \bra{\lambda}(\ket\alpha\otimes\ket\beta)=e^{-\frac12|\alpha|^2-\frac12|\beta|^2+\lambda^*\alpha\beta},
\end{equation}
which can be readily checked in the Fock basis.
Moreover, for all $\bm p,\bm q\in\mathbb N^m$~\cite{scheel2004permanents,Aaronson2013},
\begin{equation}\label{eq:quantum3}
    \bra{\bm p}\hat U\ket{\bm q}=\frac{\mathrm{Per}(U_{\bm p,\bm q})}{\sqrt{\bm p!\bm q!}},
\end{equation}
where $\hat U$ is a passive linear operation over $m$ modes whose action on the creation operators of the modes is described by the unitary matrix $U=(u_{ij})_{1\le i,j\le m}$ as
\begin{equation}\label{eq:actioncrea}
    \hat U\hat a^\dag_j\hat U^\dag=\sum_{i=1}^mu_{ij}\hat a^\dag_i,
\end{equation}
for all $j\in\{1,\dots,m\}$.
As their name indicates, passive linear operations do not change the total number of photons~\cite{weedbrook2012gaussian}: for all $n\in\mathbb N$,
\begin{equation}\label{eq:quantum4}
    \hat U\Pi_n=\Pi_n\hat U,
\end{equation}
where $\Pi_n:=\sum_{|\bm p|=n}\ket{\bm p}\!\bra{\bm p}$ is the $m$-mode projector onto states with total photon number equal to $n$.
Passive linear operations map tensor products of coherent states to tensor products of coherent states~\cite{weedbrook2012gaussian}: for all $\bm\alpha\in\mathbb C^m$,
\begin{equation}\label{eq:quantum5}
    \hat U\ket{\bm \alpha}=\ket{U\bm\alpha}.
\end{equation}

\medskip

Recall that coherent states form an overcomplete basis:
\begin{equation}\label{eq:overcomp}
    \int_{\bm\alpha\in\mathbb C^m}\ket{\bm\alpha}\!\bra{\bm\alpha}\frac{d^m\bm\alpha d^m\bm\alpha^*}{\pi^m}=\hat{\mathbb I},
\end{equation}
where $\hat{\mathbb I}$ is the identity operator over $m$ modes.
We will also make use of the following Gaussian inner product: for all $\bm\lambda,\bm\mu\in\mathbb C^m$ with $|\lambda_k|<1$ and $|\mu_k|<1$ for all $k\in\{1,\dots,m\}$, and for all passive linear operation $\hat U$ over $2m$ modes,
\begin{equation}\label{eq:quantum6}
    \begin{aligned}
        \bra{\bm\lambda^*}\hat U\ket{\bm\mu}&=\int_{\bm\beta\in\mathbb C^{2m}}\bra{\bm\lambda^*}\ket{\bm\beta^*}\!\bra{\bm\beta^*}\hat U\ket{\bm\mu}\frac{d^{2m}\bm\beta d^{2m}\bm\beta^*}{\pi^{2m}}\\
        &=\int_{\bm\beta\in\mathbb C^{2m}}\bra{\bm\lambda^*}\ket{\bm\beta^*}\left(\bra{\bm\mu}{U^\dag\bm\beta^*}\rangle\right)^*\frac{d^{2m}\bm\beta d^{2m}\bm\beta^*}{\pi^{2m}}\\
        &=\int_{\bm\beta\in\mathbb C^{2m}}e^{\sum_{k=1}^m\lambda_k\beta_k^*\beta_{m+k}^*}e^{\sum_{k=1}^m\mu_k(U^T\bm\beta)_k(U^T\bm\beta)_{m+k}}e^{-\sum_{j=1}^{2m}\beta_j^*\beta_j}\frac{d^{2m}\bm\beta d^{2m}\bm\beta^*}{\pi^{2m}}\\
        &=\int_{\bm\beta\in\mathbb C^{2m}}\exp\left[-\frac12\begin{pmatrix}\bm\beta\\\bm\beta^*\end{pmatrix}^T\!\!\!V_U(\bm\lambda,\bm\mu)\begin{pmatrix}\bm\beta\\\bm\beta^*\end{pmatrix}\right]\frac{d^{2m}\bm\beta d^{2m}\bm\beta^*}{\pi^{2m}}\\
        &=\frac1{\sqrt{\mathrm{Det}(V_U(\bm\lambda,\bm\mu))}},
    \end{aligned}
\end{equation}
where we have used the overcompleteness of coherent states (\ref{eq:overcomp}) over $2m$ modes in the first line, the action of passive linear operations on coherent states (\ref{eq:quantum5}) in the second line, the overlap between two-mode squeezed states and coherent states (\ref{eq:sqcoh}) in the third line, and where we have introduced in the fourth line the $(4m)\times(4m)$ symmetric matrix
\begin{equation}\label{eq:defVU}
    V_U(\bm\lambda,\bm\mu):=\begin{pmatrix}-UV_{\bm\mu}U^T&I_{2m}\\I_{2m}&-V_{\bm\lambda}\end{pmatrix},
\end{equation}
where for all $\bm w\in\mathbb C^n$,
\begin{equation}\label{eq:Vw}
    V_{\bm w}:=\begin{pmatrix}0_m&\mathrm{Diag}(\bm w)\\\mathrm{Diag}(\bm w)&0_m\end{pmatrix}.
\end{equation}
Note that we associate mode $1$ with mode $m+1$, mode $2$ with mode $m+2$, and so on, i.e.\ $\ket{\bm\lambda}=\bigotimes_{k=1}^m\ket{\lambda_k}$, where $\ket{\lambda_k}$ is an unormalized two-mode squeezed state over modes $k$ and $m+k$.

\medskip

Finally, we note that any $m\times m$ matrix $B$ with $\|B\|\le1$, where $\|\cdot\|$ is the spectral norm, may be embedded as the top-left submatrix of a $(2m)\times(2m)$ unitary matrix $U$ (see~\cite[Lemma~29]{Aaronson2013} for an explicit construction). We will use this fact multiple times throughout the paper to extend identities proven for unitary matrices to the case of generic matrices.

\subsection{Boson Sampling}

Boson Sampling is a sub-universal model of quantum computation introduced by Aaronson and Arkhipov (AA) in~\cite{Aaronson2013}, which takes as input a Fock state $\ket{\bm1\oplus\bm0}=\ket1^{\otimes n}\otimes\ket0^{\otimes m-n}$, evolves it according to a passive linear operation $\hat U$ over $m$ modes with unitary matrix $U$, and measures the photon number of each output mode. 

With Eq.~(\ref{eq:quantum3}), the probability of detecting $\bm p=(p_1,\dots,p_m)\in\mathbb N^m$ output photons is given by:
\begin{equation}
    \begin{aligned}
        P_\text{BS}(\bm p|n):=&|\langle p_1\dots p_m|\hat U(\ket1^{\otimes n}\otimes\ket 0^{\otimes m-n})|^2\\
        =&\frac{\left|\mathrm{Per}(U_{\bm p,\bm 1\oplus\bm0})\right|^2}{\bm p!}.
    \end{aligned}
\end{equation}
This model of quantum computation, while not believed to be universal, is already capable of outperforming its classical counterparts: AA showed that the output probability distribution $P_\text{BS}$ is hard to sample exactly classically for $m\ge2n$, or the polynomial hierarchy of complexity classes collapses to its third level~\cite{Aaronson2013}. 

Moreover, under additional plausible conjectures, AA showed that this collapse holds for $m=\Theta(n^5\log^2n)$ even if only an efficient classical algorithm for \textit{approximate sampling} exists (i.e.\ a classical algorithm which samples efficiently from a probability distribution that has a small total variation distance with the ideal Boson Sampling output probability distribution $P_\text{BS}$). They further conjectured that this approximate hardness should hold for $m=\Theta(n^2)$. 

The approximate hardness proof for Boson Sampling is based on a matrix-hiding argument which requires a specific regime of parameters $m$ and $n$: given $\delta>0$, the relation between $m$ and $n$ should be such that the distribution $\mathcal H^{n,m}$ of $n\times n$ submatrices of $m\times m$ Haar-random unitary matrices multiplied by $\sqrt m$ is $O(\delta)$-close in total variation distance to the distribution $\mathcal G^{n\times n}$ of $n\times n$ matrices of i.i.d.\ Gaussians (see Theorem 35 in~\cite{Aaronson2013}). In particular, AA showed that, for any $\delta>0$, $\|\mathcal H^{n,m}-\mathcal G^{n\times n}\|_\text{TV}=O(\delta)$ when $m\ge\frac{n^5}\delta\log^2\frac n\delta$.

This result was later refined in~\cite{leverrier2018p} with the bound $\|\mathcal H^{n,m}-\mathcal G^{n\times n}\|_\text{TV}\le\frac{n^3}{2(m-n)}$, which gives a total variation distance $O(\delta)$ for $m\ge\frac{n^3}\delta$.

Finally, in the case of orthogonal matrices rather than unitary matrices, it was shown in~\cite{jiang2017distances} that (see the equation following Eq.~(2.31) in~\cite{jiang2017distances}, with $n\rightarrow m$ and $p=q\rightarrow n$):
\begin{equation}
    \begin{aligned}
        D_\text{KL}(\mathcal H^{n,m},\mathcal G^{n\times n})&\le\frac{n^3}{4(m-n)}-\frac{(m-n)-(n+1)}2\left[\frac{n^3+O(n^2)}{2(m-n)^2}+\frac{n^4+O(n^3)}{3(m-n)^3}\right]\\
        &\le O\!\left(\frac{n^2}m+\frac{n^4}{m^2}+\frac{n^5}{m^3}\right),
    \end{aligned}
\end{equation}
where $D_\text{KL}$ is the Kullback--Leibler distance and where we obtain the second line under the assumption $n=o(m)$. In particular, for $\delta>0$, choosing $m\ge\frac{n^2}{\delta^2}$ implies $D_\text{KL}(\mathcal H^{n,m},\mathcal G^{n\times n})\le O(\delta^2)$, and thus $\|\mathcal H^{n,m}-\mathcal G^{n\times n}\|_\text{TV}\le O(\delta)$ by Pinsker's inequality (see Eq.~(1.3) in~\cite{jiang2017distances}). It was also argued in~\cite{jiang2017distances} that a similar result should hold for unitary matrices.

Summarising these results, approximate hardness of Boson Sampling is proven for passive linear operations described by orthogonal matrices in the regime $m=\Theta(n^2)$~\cite{jiang2017distances} and for unitary matrices in the regime $m=\Theta(n^3)$~\cite{leverrier2018p}.

\subsection{The MacMahon master theorem}
\label{sec:mmmt}

The MacMahon master theorem is an important result in combinatorics which relates the permanent to the determinant:
\begin{theorem*}[MacMahon master theorem~\cite{macmahon2001combinatory}]\label{th:mmmt}
Let $\bm z=(z_1,\dots,z_m)$ be formal variables. For any $m\times m$ matrix $A$,
\begin{equation}
    \sum_{\bm p\in\mathbb N^m}\frac{\bm z^{\bm p}}{\bm p!}\mathrm{Per}(A_{\bm p,\bm p})=\frac1{\mathrm{Det}(I-ZA)},
\end{equation}
where $Z=\mathrm{Diag}(\bm z)$.
\end{theorem*}

\noindent This theorem is particularly useful to derive short proofs of combinatorial identities: it expresses the permanent of an $m\times m$ matrix $A$ with rows and columns repeated in the same way as the coefficient
\begin{equation}
    \mathrm{Per}(A_{\bm p,\bm p})=\bm p![\bm z^{\bm p}]\left(\frac1{\mathrm{Det}(I-ZA)}\right),
\end{equation}
where $Z=\mathrm{Diag}(\bm z)$ and $\bm p\in\mathbb N^m$, while the same permanent may also be expressed as the coefficient
\begin{equation}
    \mathrm{Per}(A_{\bm p,\bm p})=\bm p![\bm z^{\bm p}](A\bm z)^{\bm p}.
\end{equation}
Picking a specific matrix $A$ and a pattern $\bm p$ and computing the above expressions yields combinatorial identities, a famous example being the short proof of Dixon's identity~\cite{dixon1891sum,good1962proofs}, $\sum_{k=0}^{2n}(-1)^k\binom{2n}k^3=(-1)^n\binom{3n}{n,n,n}$, by taking
\begin{equation}
    A=\begin{pmatrix}0&1&-1\\-1&0&1\\1&-1&0\end{pmatrix},
\end{equation}
and $\bm p=(2n,2n,2n)$ for $n\in\mathbb N^*$.

\medskip

Various generalizations of the MacMahon master theorem have been introduced over the years~\cite{good1962short,garoufalidis2006quantum,konvalinka2007non,tuite2013some,kocharovsky2022hafnian}. 
In physics, this theorem plays an important role in the quantum theory of angular momentum~\cite{chen2001angular} and is also oftentimes interpreted as an instance of the boson-fermion correspondence~\cite{garoufalidis2006quantum}.

\section{Main results}
\label{sec:results}

In this section, we summarize our main findings, which we prove in section~\ref{sec:new}. We obtain the following generalization of the MacMahon master theorem:

\begin{theorem}\label{th:mmmt+}
Let $\bm x=(x_1,\dots,x_m)$ and $\bm y=(y_1,\dots,y_m)$ be formal variables. For all $m\times m$ matrices $A$ and $B$,
\begin{equation}
    \sum_{\bm p,\bm q\in\mathbb N^m}\frac{\bm x^{\bm p}\bm y^{\bm q}}{\bm p!\bm q!}\mathrm{Per}(A_{\bm p,\bm q})\mathrm{Per}(B_{\bm q,\bm p})=\frac1{\mathrm{Det}(I-XAYB)},
\end{equation}
where $X=\mathrm{Diag}(\bm x)$ and $Y=\mathrm{Diag}(\bm y)$.
\end{theorem}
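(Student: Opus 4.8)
The plan is to prove Theorem~\ref{th:mmmt+} by computing a single Gaussian inner product in two different ways, exactly as the excerpt sketches for the original MacMahon master theorem. The key object is the overlap $\bra{\bm\lambda^*}\hat U\ket{\bm\mu}$ between two-mode squeezed states, evaluated through Eq.~(\ref{eq:quantum6}), where $\hat U$ is chosen to act on $2m$ modes so that its matrix $U$ encodes both $A$ and $B$. Concretely, I would take the $(2m)\times(2m)$ unitary $U$ whose relevant off-diagonal blocks realize the products with $A$ and $B$ — using the embedding fact recalled at the end of section~\ref{sec:preliper} to pass from arbitrary matrices $A,B$ with bounded norm to honest unitaries, and then removing the norm restriction at the end by analytic continuation in the formal variables.

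First I would expand $\bra{\bm\lambda^*}\hat U\ket{\bm\mu}$ in the Fock basis. Writing $\ket{\bm\mu}=\sum_{\bm q}\bm\mu^{\bm q}\ket{\bm q\bm q}$ via Eq.~(\ref{eq:quantum2}) and similarly for $\bra{\bm\lambda^*}$, and inserting Eq.~(\ref{eq:quantum3}) for the matrix elements $\bra{\bm p}\hat U\ket{\bm q}$, the inner product becomes a double sum over $\bm p,\bm q\in\mathbb N^m$ whose summand is a product of two permanents of repeated submatrices of $U$, divided by the appropriate factorials, and weighted by monomials in $\bm\lambda,\bm\mu$. With $U$ arranged so that one permanent factor reduces to $\mathrm{Per}(A_{\bm p,\bm q})$ and the other to $\mathrm{Per}(B_{\bm q,\bm p})$, identifying $\bm\lambda\leftrightarrow\bm x$ and $\bm\mu\leftrightarrow\bm y$ (up to conjugation) yields exactly the left-hand side of the theorem. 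Second, I would compute the same inner product via the closed form $1/\sqrt{\mathrm{Det}(V_U(\bm\lambda,\bm\mu))}$ from the last line of Eq.~(\ref{eq:quantum6}), and simplify the determinant of the $(4m)\times(4m)$ matrix $V_U(\bm\lambda,\bm\mu)$ using the block structure of Eqs.~(\ref{eq:defVU}) and~(\ref{eq:Vw}) down to a $\mathrm{Det}(I-XAYB)$ factor.

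I expect the main obstacle to be the bookkeeping in two places. The first is choosing the right block embedding of $A$ and $B$ into $U$: the two permanent factors $\mathrm{Per}(A_{\bm p,\bm q})$ and $\mathrm{Per}(B_{\bm q,\bm p})$ must emerge with consistent row/column repetition patterns, which forces a careful pairing of the $2m$ modes (the mode-$k$ with mode-$(m+k)$ convention of the two-mode squeezed states) against the block indices of $U$. The second is the Schur-complement reduction of $\mathrm{Det}(V_U(\bm\lambda,\bm\mu))$: using the $2\times2$ block form with $I_{2m}$ off-diagonal gives $\mathrm{Det}(V_U)=\mathrm{Det}(I_{2m}-UV_{\bm\mu}U^TV_{\bm\lambda})$, and I would then unpack $V_{\bm\lambda},V_{\bm\mu}$ into their $\mathrm{Diag}$ blocks and track how the antidiagonal structure of Eq.~(\ref{eq:Vw}) collapses the product into $XAYB$ on one $m$-dimensional block, so that the surviving factor is precisely $\mathrm{Det}(I-XAYB)$.

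Finally, since both computations evaluate the same quantity, equating them proves the identity for unitary $U$ (hence for $A,B$ of bounded norm), and both sides are formal power series in $\bm x,\bm y$ whose coefficients are polynomials in the entries of $A,B$; matching coefficients of $\bm x^{\bm p}\bm y^{\bm q}$ extends the identity to arbitrary $A,B$, completing the proof.
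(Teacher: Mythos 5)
Your proposal is correct and follows essentially the same route as the paper: the paper also evaluates the Gaussian overlap $\bra{\bm\lambda^*}\hat U\ket{\bm\mu}$ of two-mode squeezed states for a $2m$-mode passive linear operation encoding $A$ and $B$ in blocks (the paper uses the block-diagonal $A\oplus B$, whereas your anti-diagonal placement works just as well and yields the same $\mathrm{Det}(I-XAYB)^2$ after Sylvester's identity), once via the Fock expansion through Eqs.~(\ref{eq:quantum2})--(\ref{eq:quantum3}) and once via Eq.~(\ref{eq:quantum6}), then extends to arbitrary $A,B$ by the unitary embedding and a polynomial-identity argument. The only detail worth adding is fixing the branch of the square root of $\mathrm{Det}(I-XAYB)^2$, which the paper settles by continuity from $\bm x=\bm y=\bm0$.
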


\noindent We further show that this generalization extends to $N$ matrices:

\begin{theorem}\label{th:mmmt++}
Let $N\ge2$. For all $k\in\{1,\dots,N\}$, let $\bm z_k=(z_{k1},\dots,z_{km})$ be formal variables. For all $m\times m$ matrices $A^{(1)},\dots,A^{(N)}$,
\begin{equation}
    \begin{aligned}
        &\sum_{\bm p_1,\dots,\bm p_N\in\mathbb N^m}\prod_{k=1}^N\frac{\bm z_k^{\bm p_k}}{\bm p_k!}\mathrm{Per}(A^{(1)}_{\bm p_1,\bm p_2})\mathrm{Per}(A^{(2)}_{\bm p_2,\bm p_3})\dots\mathrm{Per}(A^{(N)}_{\bm p_N,\bm p_1})\\
        &\quad\quad\quad\quad\quad\quad=\frac1{\mathrm{Det}(I-Z_1A^{(1)}\dots Z_NA^{(N)})},
    \end{aligned}
\end{equation}
where $Z_k=\mathrm{Diag}(\bm z_k)$ for all $k\in\{1,\dots,N\}$.
\end{theorem}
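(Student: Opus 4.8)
The plan is to mirror the quantum-inspired reading of the one-matrix MacMahon master theorem: interpret the left-hand side as the trace of a single Gaussian operator, and then evaluate that trace. Assuming first that every $A^{(k)}$ is unitary, I would use the amplitude formula~(\ref{eq:quantum3}) to write each factor as $\mathrm{Per}(A^{(k)}_{\bm p_k,\bm p_{k+1}})=\sqrt{\bm p_k!\,\bm p_{k+1}!}\,\bra{\bm p_k}\hat A^{(k)}\ket{\bm p_{k+1}}$, with indices taken cyclically. Since each $\bm p_k$ occurs once as a row label of $\hat A^{(k)}$ and once as a column label of $\hat A^{(k-1)}$ around the cycle, the product of square-root factors equals $\prod_k\bm p_k!$, which cancels the $\prod_k 1/\bm p_k!$ in the summand. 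Introducing the Fock-diagonal operators $\hat Z_k$ with $\hat Z_k\ket{\bm p}=\bm z_k^{\bm p}\ket{\bm p}$ and collapsing the sums over the intermediate indices $\bm p_2,\dots,\bm p_N$ by completeness of the Fock basis, the left-hand side becomes a single cyclic trace
\begin{equation}
\sum_{\bm p_1,\dots,\bm p_N\in\mathbb N^m}\prod_{k=1}^N\frac{\bm z_k^{\bm p_k}}{\bm p_k!}\,\mathrm{Per}(A^{(k)}_{\bm p_k,\bm p_{k+1}})=\mathrm{Tr}\!\left(\hat Z_1\hat A^{(1)}\hat Z_2\hat A^{(2)}\cdots\hat Z_N\hat A^{(N)}\right),
\end{equation}
where $\bm p_{N+1}:=\bm p_1$.

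Next I would note that $\hat Z_k$ rescales creation operators by the diagonal matrix $Z_k$, so $\hat Z_k\hat A^{(k)}$ acts on creation operators through $Z_kA^{(k)}$; by the composition law for such (generalized, not necessarily unitary) passive operations, the whole product is one operator $\hat M$ associated with the matrix $M=Z_1A^{(1)}Z_2A^{(2)}\cdots Z_NA^{(N)}$. The theorem thereby reduces to the clean statement $\mathrm{Tr}(\hat M)=1/\mathrm{Det}(I-M)$, which is exactly the one-matrix MacMahon identity read off for the generalized passive operation $\hat M$. I would establish it as in the Gaussian-state reading of the single-matrix case: glue two copies of the system with two-mode squeezed states to recast the trace as an inner product of the form~(\ref{eq:quantum6}), treat the formal $\bm z_k$ as small complex numbers so that $\|M\|<1$ ensures convergence, and reduce the resulting $(4m)\times(4m)$ Gaussian determinant to $\mathrm{Det}(I-M)^2$ by a Schur-complement computation.

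To drop the unitarity assumption, I would embed each $A^{(k)}$ (rescaled to spectral norm at most $1$) as the top-left $m\times m$ block of a $2m\times 2m$ unitary $U^{(k)}$ via the construction of~\cite[Lemma~29]{Aaronson2013}, and apply the unitary identity with the extra $m$ entries of each $\bm z_k$ set to zero. On the left this annihilates every Fock pattern supported on the ancillary modes and leaves precisely $\mathrm{Per}(A^{(k)}_{\bm p_k,\bm p_{k+1}})$; on the right, the vanishing ancilla variables kill the lower block rows of each $Z_k'U^{(k)}$, so the $2m\times 2m$ matrix becomes block upper-triangular and its determinant collapses to $\mathrm{Det}(I-Z_1A^{(1)}\cdots Z_NA^{(N)})$. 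Homogeneity in the entries of the $A^{(k)}$ and analytic continuation in the $\bm z_k$ then remove both the norm and smallness restrictions.

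The step I expect to be the genuine obstacle is the trace evaluation $\mathrm{Tr}(\hat M)=1/\mathrm{Det}(I-M)$ for a \emph{non-unitary} $M$: formula~(\ref{eq:quantum6}) is stated only for genuine passive (unitary) operations, so I must extend it to generalized passive operations while tracking the extra Gaussian normalization factor $\exp[\tfrac12(\|M\bm\alpha\|^2-\|\bm\alpha\|^2)]$ appearing in $\hat M\ket{\bm\alpha}$, and then carry out the block-determinant reduction carefully. A more economical route that bypasses this entirely is a purely combinatorial reduction to the one-matrix MacMahon master theorem: assemble the $A^{(k)}$ into the $Nm\times Nm$ block-cyclic matrix $\mathcal A$ with $\mathcal A_{k,k+1}=A^{(k)}$, observe that the block-cyclic structure forces both $\mathrm{Per}(\mathcal A_{\bm p,\bm p})=\prod_k\mathrm{Per}(A^{(k)}_{\bm p_k,\bm p_{k+1}})$ and $\mathrm{Det}(I-\mathcal Z\mathcal A)=\mathrm{Det}(I-Z_1A^{(1)}\cdots Z_NA^{(N)})$, and apply the master theorem directly to $\mathcal A$, which already holds for arbitrary matrices and so sidesteps the unitarity issue altogether.
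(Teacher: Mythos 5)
Your proposal is correct, but it reaches the result by a different route than the paper. The paper proves Theorem~\ref{th:mmmt++} by induction on $N$: the base case is Theorem~\ref{th:mmmt+} (established via a Gaussian overlap of superpositions of two-mode squeezed states), and the inductive step absorbs $A^{(N)}Z_{N+1}A^{(N+1)}$ into a single matrix and splits its permanent with the Cauchy--Binet theorem~(\ref{eq:percomposition}). Your first route---reading the left-hand side as $\mathrm{Tr}(\hat Z_1\hat A^{(1)}\cdots\hat Z_N\hat A^{(N)})$---is essentially the same mechanism in operator clothing: the composition law you invoke for generalized passive operations \emph{is} Cauchy--Binet, and $\mathrm{Tr}(\hat M)=\sum_{\bm p}\mathrm{Per}(M_{\bm p,\bm p})/\bm p!$ is exactly the MacMahon left-hand side, so the non-unitary extension of the Gaussian integral~(\ref{eq:quantum6}) that you flag as the main obstacle can be bypassed entirely; there is no need to redo the $(4m)\times(4m)$ determinant computation for non-unitary $M$. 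Your second route, via the $Nm\times Nm$ block-cyclic matrix $\mathcal A$, is genuinely different from the paper and arguably cleaner: the factorization $\mathrm{Per}(\mathcal A_{\bm p,\bm p})=\prod_k\mathrm{Per}(A^{(k)}_{\bm p_k,\bm p_{k+1}})$ holds because a permutation contributes only if it maps block-$k$ rows into block-$(k{+}1)$ columns, which forces $|\bm p_1|=\dots=|\bm p_N|$, both sides vanishing otherwise by the non-square convention; and $\mathrm{Det}(I-\mathcal Z\mathcal A)=\mathrm{Det}(I-Z_1A^{(1)}\cdots Z_NA^{(N)})$ follows from iterated Schur complements (or Sylvester's theorem). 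This yields Theorem~\ref{th:mmmt++} in one shot from the one-matrix master theorem, with no induction, no separate base case, and no unitarity or convergence issues. What it costs is only the two block-structure observations; what the paper's induction buys is that it recycles machinery (Theorem~\ref{th:mmmt+} and Cauchy--Binet) already derived quantum-mechanically, keeping the entire chain of reasoning quantum-inspired.
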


\noindent As a corollary of Theorem~\ref{th:mmmt+}, we obtain:

\begin{corollary}\label{coro:mmmt+}
For any $m\times m$ matrix $A$ and all $\bm p,\bm q\in\mathbb N^m$ with $|\bm p|=|\bm q|=n\in\mathbb N$,
\begin{equation}
        \mathrm{Per}(A_{\bm p,\bm q})=\frac{\bm p!\bm q!}{n!}[\bm x^{\bm p}\bm y^{\bm q}]\left(\bm x^TA\bm y\right)^n.
\end{equation}
\end{corollary}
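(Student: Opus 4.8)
The plan is to deduce the corollary from Theorem~\ref{th:mmmt+} by choosing the second matrix $B$ so that the double sum of permanents collapses onto single permanents. The natural choice is the $m\times m$ all-ones matrix $J$, defined by $J_{ij}=1$ for all $i,j$. First I would evaluate $\mathrm{Per}(J_{\bm q,\bm p})$: repeating rows and columns of $J$ leaves every entry equal to $1$, so $J_{\bm q,\bm p}$ is the $|\bm q|\times|\bm p|$ all-ones matrix. By the non-square convention this permanent vanishes unless $|\bm p|=|\bm q|$, and when $|\bm p|=|\bm q|=n$ it equals the permanent of the $n\times n$ all-ones matrix, namely $n!$. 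Hence only the terms with equal total degree survive, and each picks up a factor $n!$.

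Next I would simplify the right-hand side $1/\mathrm{Det}(I-XAYJ)$. Writing $J=\bm1\bm1^T$ as the outer product of the all-ones vector, $M:=XAYJ=\bm u\bm1^T$ is rank one with $\bm u=XAY\bm1$, and the matrix determinant lemma gives $\mathrm{Det}(I-M)=1-\bm1^T\bm u=1-\sum_{i,j}x_ia_{ij}y_j=1-\bm x^TA\bm y$. Therefore the right-hand side becomes the geometric series $\frac{1}{1-\bm x^TA\bm y}=\sum_{n\ge0}(\bm x^TA\bm y)^n$, understood as a formal power series in $\bm x,\bm y$.

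Combining these, Theorem~\ref{th:mmmt+} specializes to
\begin{equation}
    \sum_{\substack{\bm p,\bm q\in\mathbb N^m\\|\bm p|=|\bm q|}}\frac{|\bm p|!}{\bm p!\bm q!}\,\bm x^{\bm p}\bm y^{\bm q}\,\mathrm{Per}(A_{\bm p,\bm q})=\sum_{n\ge0}(\bm x^TA\bm y)^n.
\end{equation}
The final step is to extract the coefficient of $\bm x^{\bm p}\bm y^{\bm q}$ for a fixed pair with $|\bm p|=|\bm q|=n$: on the left only the single term indexed by that pair contributes, giving $\frac{n!}{\bm p!\bm q!}\mathrm{Per}(A_{\bm p,\bm q})$, while on the right only the homogeneous piece $(\bm x^TA\bm y)^n$ has the correct bidegree, giving $[\bm x^{\bm p}\bm y^{\bm q}](\bm x^TA\bm y)^n$. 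Rearranging yields the claimed identity.

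The computations are all routine; the only real content is spotting that $B=J$ is the choice that simultaneously turns the double permanent sum into single permanents (via $\mathrm{Per}(J_{\bm q,\bm p})=n!$) and collapses the determinant to the rank-one form $1-\bm x^TA\bm y$. The one place to be careful is the bookkeeping with the non-square convention, ensuring that the terms with $|\bm p|\ne|\bm q|$ genuinely drop out on both sides so that the coefficient extraction is consistent.
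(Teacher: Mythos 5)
Your proposal is correct and follows essentially the same route as the paper: the authors likewise specialize Theorem~\ref{th:mmmt+} to the all-ones matrix $B=J_m$, use the rank-one structure to reduce the determinant to $1-\bm x^TA\bm y$, and extract the $\bm x^{\bm p}\bm y^{\bm q}$ coefficient of the resulting geometric series. Your extra care with the non-square convention (terms with $|\bm p|\neq|\bm q|$ dropping out on both sides) is a welcome detail that the paper leaves implicit.
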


\noindent As a consequence of Corollary~\ref{coro:mmmt+}, we obtain the following family of generating functions for the permanent:

\begin{theorem}\label{th:gen}
Let $f(z)=\sum_{n=0}^{+\infty}f_nz^n$ be a series. Let $\bm x=(x_1,\dots,x_m)$ and $\bm y=(y_1,\dots,y_m)$ be formal variables. For any $m\times m$ matrix $A$,
\begin{equation}
    f(\bm x^TA\bm y)=\sum_{\substack{n\in\mathbb N,\bm p,\bm q\in\mathbb N^m\\|\bm p|=|\bm q|=n}}f_nn!\frac{\bm x^{\bm p}\bm y^{\bm q}}{\bm p!\bm q!}\mathrm{Per}(A_{\bm p,\bm q}),
\end{equation}
Equivalently,
\begin{equation}
    \partial^{\bm p}_{\bm x}\partial^{\bm q}_{\bm y}f(\bm x^TA\bm y)\big\vert_{\bm x=\bm y=\bm0}=\mathrm{Per}(A_{\bm p,\bm q})\,\partial^n_zf(z)\big\vert_{z=0},
\end{equation}
for all $n\in\mathbb N$ and all $\bm p,\bm q\in\mathbb N^m$ such that $|\bm p|=|\bm q|=n$. As a result, when $f_n\neq0$,
\begin{equation}
    \mathrm{Per}(A_{\bm p,\bm q})=\underset{\bm x,\bm y\in\mathbb T^m}{\mathbb E}\left[\frac{\bm p!\bm q!}{\bm x^{\bm p}\bm y^{\bm q}}\frac{f(\bm x^TA\bm y)}{\partial^n_zf(z)\big\vert_{z=0}}\right],
\end{equation}
where the average is over random vectors with complex coefficients of modulus $1$.
\end{theorem}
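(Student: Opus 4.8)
The plan is to prove Theorem~\ref{th:gen} by treating it as a direct consequence of Corollary~\ref{coro:mmmt+}, which already gives the crucial extraction formula $\mathrm{Per}(A_{\bm p,\bm q})=\frac{\bm p!\bm q!}{n!}[\bm x^{\bm p}\bm y^{\bm q}](\bm x^TA\bm y)^n$ for $|\bm p|=|\bm q|=n$. The whole theorem is really a repackaging of this single identity into generating-function, differential, and probabilistic forms, so the work is in bookkeeping rather than in any genuinely new idea.

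First I would establish the first (generating-function) identity. Starting from $f(\bm x^TA\bm y)=\sum_{n=0}^{+\infty}f_n(\bm x^TA\bm y)^n$, I expand each power $(\bm x^TA\bm y)^n$ as a polynomial in the monomials $\bm x^{\bm p}\bm y^{\bm q}$ with $|\bm p|=|\bm q|=n$. The coefficient of $\bm x^{\bm p}\bm y^{\bm q}$ in $(\bm x^TA\bm y)^n$ is exactly $[\bm x^{\bm p}\bm y^{\bm q}](\bm x^TA\bm y)^n=\frac{n!}{\bm p!\bm q!}\mathrm{Per}(A_{\bm p,\bm q})$ by Corollary~\ref{coro:mmmt+}. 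Substituting this and summing over $n$ and over all admissible $\bm p,\bm q$ yields the stated expansion
\begin{equation*}
    f(\bm x^TA\bm y)=\sum_{\substack{n\in\mathbb N,\bm p,\bm q\in\mathbb N^m\\|\bm p|=|\bm q|=n}}f_nn!\frac{\bm x^{\bm p}\bm y^{\bm q}}{\bm p!\bm q!}\mathrm{Per}(A_{\bm p,\bm q}).
\end{equation*}
At the level of formal power series this is purely a reindexing; if one wants analytic rather than formal validity, I would note that convergence holds wherever $f$ converges, since the homogeneity in $\bm x,\bm y$ respects the grading by $n$.

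Next I would derive the equivalent differential form by applying $\partial_{\bm x}^{\bm p}\partial_{\bm y}^{\bm q}$ term by term to the expansion and evaluating at $\bm x=\bm y=\bm0$. Only the single monomial $\bm x^{\bm p}\bm y^{\bm q}$ survives this operation, contributing a factor $\bm p!\bm q!$, and for a fixed pair $(\bm p,\bm q)$ with $|\bm p|=|\bm q|=n$ only the $n$-th term of the outer sum matches. Since $f_nn!=\partial_z^nf(z)\vert_{z=0}$, this collapses to $\mathrm{Per}(A_{\bm p,\bm q})\,\partial_z^nf(z)\vert_{z=0}$, as claimed.

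Finally I would obtain the probabilistic formula. Over the torus $\bm x,\bm y\in\mathbb T^m$, distinct monomials are orthogonal under the uniform average, i.e.\ $\mathbb E_{\bm x,\bm y\in\mathbb T^m}[\bm x^{\bm p'-\bm p}\bm y^{\bm q'-\bm q}]=\delta_{\bm p,\bm p'}\delta_{\bm q,\bm q'}$. Multiplying the generating-function expansion by $\frac{\bm p!\bm q!}{\bm x^{\bm p}\bm y^{\bm q}}$ and averaging therefore projects onto the single term indexed by $(\bm p,\bm q)$; dividing by $\partial_z^nf(z)\vert_{z=0}=f_nn!$ (assumed nonzero) isolates $\mathrm{Per}(A_{\bm p,\bm q})$. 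The main obstacle, such as it is, is not conceptual but one of justifying the interchange of summation and averaging/differentiation: I would handle the formal-series statements by grading, and for the analytic average over $\mathbb T^m$ invoke uniform convergence of $f$ to legitimize swapping the expectation with the sum, so that the orthogonality relation can be applied term by term.
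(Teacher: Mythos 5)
Your proposal is correct, but it takes a different route from the paper's own proof of \cref{th:gen}. You derive the generating-function identity by linearity from Corollary~\ref{coro:mmmt+} (equivalently Eq.~(\ref{eq:quintessence})), expanding $f(\bm x^TA\bm y)=\sum_n f_n(\bm x^TA\bm y)^n$ and substituting the known homogeneous expansion of each power; the paper explicitly acknowledges this route in one sentence ("One may prove this statement by linearity, using Eq.~(\ref{eq:quintessence})") but deliberately does not carry it out, instead giving a direct quantum-inspired computation: it writes the left-hand side as $e^{\frac12\|\bm\alpha\|^2+\frac12\|\bm\beta\|^2}\sum_n f_n n!\bra{\bm\alpha^*}\Pi_n\hat U\ket{\bm\beta}$ using the photon-number projectors $\Pi_n$, the commutation relation~(\ref{eq:quantum4}), and the action of $\hat U$ on coherent states, and then evaluates $\bra{\bm\alpha^*}\Pi_n\ket{U\bm\beta}$ by the multinomial theorem. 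Your approach is shorter and purely combinatorial once Corollary~\ref{coro:mmmt+} is in hand, and since that corollary is established earlier in the paper (as a consequence of Theorem~\ref{th:mmmt+}) there is no circularity; what it gives up is the self-contained operational interpretation that is the paper's stated purpose. Your treatments of the differential form (only the matching monomial survives $\partial_{\bm x}^{\bm p}\partial_{\bm y}^{\bm q}\vert_{\bm0}$, contributing $\bm p!\bm q!$, with $f_nn!=\partial_z^nf(z)\vert_{z=0}$) and of the torus average (orthogonality of characters) coincide with the paper's handling of the final identity, and your remark on justifying the interchange of sum and expectation is a reasonable, if standard, point of care that the paper glosses over by working with formal series.
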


\noindent From this theorem we derive various permanent identities, the most notable one being a formula for the sum of two permanents:

\begin{theorem}\label{th:sumper}
For all $m\times m$ matrices $A$ and $B$, all $n\in\mathbb N$ and all $\bm p,\bm q\in\mathbb N^m$ such that $|\bm p|=|\bm q|=n$,
\begin{equation}
    \begin{aligned}
        &\mathrm{Per}(A_{\bm p,\bm q})+\mathrm{Per}(B_{\bm p,\bm q})\\
        &\quad=\sum_{k=0}^{\lfloor\frac n2\rfloor}\frac{(-1)^k}{\binom{n-1}k}\!\!\!\!\!\!\!\sum_{\substack{\bm a+\bm b+\bm c=\bm p\\\bm a'+\bm b'+\bm c'=\bm q\\|\bm a|=|\bm b|=|\bm a'|=|\bm b'|=k}}\!\!\!\!\!\!\!\frac{\bm p!\bm q!}{\bm a!\bm b!\bm c!\bm a'!\bm b'!\bm c'!}\mathrm{Per}(A_{\bm a,\bm a'})\mathrm{Per}(B_{\bm b,\bm b'})\mathrm{Per}((A+B)_{\bm c,\bm c'}).
    \end{aligned}
\end{equation}
In particular, when $\bm p=\bm q=\bm 1$,
\begin{equation}
        \mathrm{Per}(A)+\mathrm{Per}(B)=\sum_{k=0}^{\lfloor\frac n2\rfloor}\frac{(-1)^k}{\binom{n-1}k}\!\!\!\!\!\!\sum_{\substack{\bm a+\bm b+\bm c=\bm 1\\\bm a'+\bm b'+\bm c'=\bm 1\\|\bm a|=|\bm b|=|\bm a'|=|\bm b'|=k}}\!\!\!\!\!\!\!\!\mathrm{Per}(A_{\bm a,\bm a'})\mathrm{Per}(B_{\bm b,\bm b'})\mathrm{Per}((A+B)_{\bm c,\bm c'}).
\end{equation}
\end{theorem}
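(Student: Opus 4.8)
The plan is to reduce the whole statement to a single scalar power-sum identity in two commuting variables, using Corollary~\ref{coro:mmmt+} (which is the $f(z)=z^n$ instance of Theorem~\ref{th:gen}) to convert every permanent into a coefficient of a generating function. Throughout I would set $u:=\bm x^TA\bm y$ and $v:=\bm x^TB\bm y$, so that $\bm x^T(A+B)\bm y=u+v$; crucially, $u$ and $v$ are ordinary \emph{commuting} polynomials in the variables $\bm x,\bm y$, each of bidegree $(1,1)$, so any polynomial identity in commuting indeterminates may be specialized to them.

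First I would rewrite both sides as coefficients of $\bm x^{\bm p}\bm y^{\bm q}$. By Corollary~\ref{coro:mmmt+} the left-hand side equals $\bm p!\bm q!\,[\bm x^{\bm p}\bm y^{\bm q}]\,\tfrac{u^n+v^n}{n!}$. For the right-hand side, I would apply Corollary~\ref{coro:mmmt+} to each of the three permanent factors separately, writing
\[
\frac{u^k}{k!}=\!\!\sum_{|\bm a|=|\bm a'|=k}\!\!\frac{\mathrm{Per}(A_{\bm a,\bm a'})}{\bm a!\,\bm a'!}\,\bm x^{\bm a}\bm y^{\bm a'},
\]
and analogously for $v^k/k!$ with $B$ and for $(u+v)^{n-2k}/(n-2k)!$ with $A+B$. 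Taking the triple Cauchy product of these three series and extracting $[\bm x^{\bm p}\bm y^{\bm q}]$ reproduces exactly the constrained convolution over $\bm a+\bm b+\bm c=\bm p$ and $\bm a'+\bm b'+\bm c'=\bm q$ with $|\bm a|=|\bm b|=|\bm a'|=|\bm b'|=k$; multiplying by $\bm p!\bm q!$ turns the product $1/(\bm a!\bm b!\bm c!\bm a'!\bm b'!\bm c'!)$ into the multinomial weight $\tfrac{\bm p!\bm q!}{\bm a!\bm b!\bm c!\bm a'!\bm b'!\bm c'!}$ appearing in the statement. Hence the theorem reduces to the scalar identity
\[
\frac{u^n+v^n}{n!}=\sum_{k=0}^{\lfloor n/2\rfloor}\frac{(-1)^k}{\binom{n-1}{k}\,k!\,k!\,(n-2k)!}\,u^kv^k(u+v)^{n-2k}.
\]

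Next I would prove this scalar identity. The key simplification is the factorial match $\tfrac{n!}{\binom{n-1}{k}k!\,k!\,(n-2k)!}=\tfrac{n}{n-k}\binom{n-k}{k}$, a one-line computation. The identity then reads $u^n+v^n=\sum_{k}(-1)^k\tfrac{n}{n-k}\binom{n-k}{k}(uv)^k(u+v)^{n-2k}$, which is precisely the classical expansion of the power sum $p_n=u^n+v^n$ in the elementary symmetric polynomials $e_1=u+v$ and $e_2=uv$. I would establish it either by induction from the Newton recurrence $p_n=e_1p_{n-1}-e_2p_{n-2}$, or most cleanly by expanding the generating function $\sum_{n\ge0}(u^n+v^n)t^n=\tfrac{2-e_1t}{1-e_1t+e_2t^2}$ as a power series in $t$ and reading off the coefficient of $t^n$.

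The special case $\bm p=\bm q=\bm1$ (for which $n=m$ and $\bm p!=\bm q!=1$) is then immediate substitution. The hard part will not be conceptual but organizational: correctly bookkeeping the triple Cauchy product so that the degree constraints $|\bm a|=|\bm b|=k$ and $|\bm c|=n-2k$ align with the summation ranges, and pinning down the precise scalar identity---in particular recognizing the somewhat unusual coefficient $1/\binom{n-1}{k}$ as $\tfrac{n}{n-k}\binom{n-k}{k}$ rescaled by $k!\,k!\,(n-2k)!/n!$. Once that identity is correctly identified, everything else is routine.
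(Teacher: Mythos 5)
Your proposal is correct, and it takes a genuinely different route from the paper's. The paper proves the theorem by applying Theorem~\ref{th:gen} with $f(z)=-\log(1-z)$ to the identity $-\log(1-\bm x^TA\bm y)-\log(1-\bm x^TB\bm y)=-\log\bigl[1-\bm x^T(A+B-B\bm y\bm x^TA)\bm y\bigr]$, which forces it to invoke the sum formula of Eq.~(\ref{eq:persum}) for the matrix $A+B-B\bm y\bm x^TA$, to evaluate the permanent of the rank-one block $(B\bm y\bm x^TA)_{\bm k,\bm k'}$ explicitly, and then to carry out a fairly heavy multi-index relabelling before the coefficient $(-1)^k/\binom{n-1}{k}$ emerges. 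You instead use only Corollary~\ref{coro:mmmt+} (Eq.~(\ref{eq:quintessence})), applied separately to $A$, $B$ and $A+B$, to turn both sides into coefficients of ordinary commuting polynomials in $\bm x,\bm y$; the multi-index bookkeeping collapses into a single triple Cauchy product (with $|\bm c|=|\bm c'|=n-2k$ forced automatically by $|\bm p|=|\bm q|=n$), and the entire content of the theorem is isolated in the two-variable Waring--Girard identity $u^n+v^n=\sum_{k}(-1)^k\tfrac{n}{n-k}\binom{n-k}{k}(uv)^k(u+v)^{n-2k}$, whose equivalence with the stated coefficient follows from $\tfrac{n!}{\binom{n-1}{k}\,k!\,k!\,(n-2k)!}=\tfrac{n}{n-k}\binom{n-k}{k}$ (this computation is correct, and both of your suggested proofs of the scalar identity go through). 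The two arguments are secretly the same identity---the paper's $-\log$ trick amounts to the generating-function form $\sum_{n\ge1}\frac{u^n+v^n}{n}t^n=-\log(1-e_1t+e_2t^2)$ of Waring's formula, with the binomial expansion of $(u+v-uv)^l$ performed via the permanent sum formula---but your organization is cleaner and more modular: it makes the combinatorial origin of $1/\binom{n-1}{k}$ transparent and replaces the rank-one-permanent computation by a classical scalar lemma. One small point to flag in a write-up: both the theorem as stated and your scalar identity fail at $n=0$ (left side $2$, right side $1$), so the statement should be read with $n\ge1$, exactly as the paper's own proof implicitly assumes when it restricts to $n\in\mathbb N^*$.
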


\noindent Section~\ref{sec:qproofs} is primarily devoted to quantum-inspired proofs of existing permanent identities, including a new proof of the MacMahon master theorem. 
Along the way, we obtain the following inner product formula: 

\begin{lemma}\label{lem:BScat}
For all $\alpha\in\mathbb C$, all $\bm p\in\mathbb N^m$, all $n\le m$, and any passive linear operation $\hat U$ over $m$ modes with unitary matrix $U$,
\begin{equation}
    \begin{aligned}
        \langle\bm p|\hat U(\ket{\mathrm{cat}_\alpha}^{\otimes n}\otimes\ket 0^{\otimes m-n})&=\frac{\alpha^n}{\sqrt{\sinh^n(|\alpha|^2)}}\langle\bm p|\hat U(\ket1^{\otimes n}\otimes\ket 0^{\otimes m-n})\\
        &=\frac{\alpha^n}{\sqrt{\sinh^n(|\alpha|^2)}}\frac{\mathrm{Per}(U_{\bm p,\bm1\oplus\bm0})}{\sqrt{\bm p!}}.
    \end{aligned}
\end{equation}
\end{lemma}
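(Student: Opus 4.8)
The plan is to prove the first equality of Lemma~\ref{lem:BScat} and then apply Eq.~(\ref{eq:quantum3}) to obtain the second. The heart of the matter is understanding how a tensor product of $n$ cat states behaves under the passive linear operation $\hat U$ and a subsequent Fock-basis measurement, and why it reproduces (up to a scalar prefactor) the amplitude of the single-photon input $\ket1^{\otimes n}\otimes\ket0^{\otimes m-n}$.

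The key observation I would exploit is that passive linear operations preserve total photon number, Eq.~(\ref{eq:quantum4}), together with the fact that each cat state $\ket{\mathrm{cat}_\alpha}$, being an odd superposition $\tfrac{1}{2}(\ket\alpha-\ket{-\alpha})$ up to normalization, contains only odd photon-number components. Expanding the coherent states via Eq.~(\ref{eq:coh}) and subtracting, the even-indexed terms cancel, so $\ket{\mathrm{cat}_\alpha}$ is supported entirely on odd Fock states $\ket1,\ket3,\ket5,\dots$, and its lowest-weight component is proportional to $\ket1$. Explicitly, $\ket{\mathrm{cat}_\alpha}=\frac{e^{\frac12|\alpha|^2}}{2\sqrt{\sinh(|\alpha|^2)}}\cdot 2e^{-\frac12|\alpha|^2}\sum_{j\ge0}\frac{\alpha^{2j+1}}{\sqrt{(2j+1)!}}\ket{2j+1}$, whose $n$-fold tensor product has a lowest total-photon-number sector of exactly $n$ photons, realized uniquely by the all-ones string, with coefficient $\frac{\alpha^n}{\sqrt{\sinh^n(|\alpha|^2)}}$.

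The crucial step is then a selection argument on the right-hand side: since $\hat U$ preserves total photon number and the output $\bra{\bm p}$ is projected onto the fixed total $|\bm p|=n$ (because the single-photon reference state $\ket1^{\otimes n}\otimes\ket0^{\otimes m-n}$ has exactly $n$ photons, and the nonvanishing overlap forces $|\bm p|=n$ by Eq.~(\ref{eq:quantum4})), only the $n$-photon sector of the cat-state input can contribute to $\langle\bm p|\hat U(\ket{\mathrm{cat}_\alpha}^{\otimes n}\otimes\ket0^{\otimes m-n})$. Within that sector the input collapses precisely to $\frac{\alpha^n}{\sqrt{\sinh^n(|\alpha|^2)}}\ket1^{\otimes n}\otimes\ket0^{\otimes m-n}$, since any term with a mode occupation of $3$ or more would require another mode to carry fewer than one photon to keep the total at $n$, which is impossible for the odd-supported single-mode cat components once all $n$ cat modes must each contribute at least one photon. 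This yields the first equality, and the second follows immediately from Eq.~(\ref{eq:quantum3}) with the input index vector $\bm1\oplus\bm0$.

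**The main obstacle** I anticipate is making the photon-number truncation fully rigorous rather than heuristic: I must argue carefully that among all Fock strings in the support of $\ket{\mathrm{cat}_\alpha}^{\otimes n}\otimes\ket0^{\otimes m-n}$, the projector onto total photon number $n$ annihilates everything except the all-ones contribution. The clean way to do this is to note that each of the $n$ cat modes is supported on photon numbers $\ge1$ (odd), so any state in the tensor product has total photon number $\ge n$, with equality iff every cat mode holds exactly one photon; the remaining $m-n$ vacuum modes contribute zero. Commuting $\Pi_n$ through $\hat U$ via Eq.~(\ref{eq:quantum4}) and inserting it before the input then isolates exactly the desired rank-one term, and I would present this commutation as the linchpin of the argument.
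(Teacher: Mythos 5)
Your proof is correct and takes a genuinely different route from the paper's. The paper obtains Lemma~\ref{lem:BScat} as a byproduct of its quantum-inspired proof of Glynn's formula: it expands the (unnormalized) cat states as superpositions of two coherent states, pushes these through $\hat U$ via Eq.~(\ref{eq:quantum5}) to get the explicit sum in Eq.~(\ref{eq:Ucatoverlapgen2}), recognizes that sum as $\mathrm{Per}(U_{\bm p,\bm1\oplus\bm0})$ using Eq.~(\ref{eq:glynnrepeated}), and finally converts $\ket{\tilde{\mathrm{cat}}_\alpha}$ to $\ket{\mathrm{cat}_\alpha}$ via the relation $\ket{\mathrm{cat}_\alpha}=\frac{\alpha e^{\frac12|\alpha|^2}}{\sqrt{\sinh(|\alpha|^2)}}\ket{\tilde{\mathrm{cat}}_\alpha}$. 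You instead work entirely in the Fock basis: the cat state is supported on odd photon numbers, so each of the $n$ cat modes carries at least one photon, the total is at least $n$ with equality only for $\ket1^{\otimes n}$, and hence $\Pi_n(\ket{\mathrm{cat}_\alpha}^{\otimes n}\otimes\ket0^{\otimes m-n})=\frac{\alpha^n}{\sinh^{n/2}(|\alpha|^2)}\ket1^{\otimes n}\otimes\ket0^{\otimes m-n}$; commuting $\Pi_n$ through $\hat U$ via Eq.~(\ref{eq:quantum4}) then gives the first equality in one line, and Eq.~(\ref{eq:quantum3}) gives the second. Your argument is more elementary and makes the origin of the prefactor transparent, at the cost of not producing the general-$|\bm p|$ formula (\ref{eq:Ucatoverlapgen2}) that the paper reuses elsewhere. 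One caveat applies to both proofs equally: the identity as stated holds only when $|\bm p|\le n$ (for $|\bm p|<n$ both sides vanish). For $|\bm p|>n$ the right-hand side vanishes by photon-number conservation while the left-hand side generically does not --- the paper itself notes that detection events with $|\bm p|>n$ occur for cat-state inputs --- so the quantifier ``for all $\bm p\in\mathbb N^m$'' in the lemma should really be $|\bm p|=n$; your phrase ``the nonvanishing overlap forces $|\bm p|=n$'' implicitly makes this restriction, just as the paper's ``Setting $|\bm p|=n$'' does.
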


\noindent Lemma~\ref{lem:BScat} has the following direct implications for the hardness of Boson Sampling with input cat states~\cite{rohde2015evidence}, which we discuss in section~\ref{sec:BScat}:

\begin{theorem}\label{th:hardBScat}
Let $m=\text{poly }n\ge2n$ and $\alpha\neq0$. Boson Sampling with input $\ket{\mathrm{cat}_\alpha}^{\otimes n}\otimes\ket 0^{\otimes m-n}$ with $\alpha\in\mathbb C$ is hard to sample exactly classically unless the polynomial hierarchy of complexity classes collapses to its third level.

Moreover, assuming $|\alpha|=\mathcal O(n^{-1/4}\log^{1/4}m)$ Boson Sampling with input cat states is as hard to sample approximately as Boson Sampling with input single-photons, i.e.\ Boson Sampling with input cat states is hard to sample approximately classically unless the polynomial hierarchy of complexity classes collapses to its third level, in the same regime as Boson Sampling, modulo the complexity conjectures introduced by AA in~\cite{Aaronson2013}.
\end{theorem}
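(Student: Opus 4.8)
The plan is to reduce standard Boson Sampling (single-photon input) to Boson Sampling with cat-state input, exploiting that Lemma~\ref{lem:BScat} makes the two output distributions exactly proportional on the sector of total photon number $n$. Writing $P_{\text{cat}}(\bm p)=|\bra{\bm p}\hat U(\ket{\mathrm{cat}_\alpha}^{\otimes n}\otimes\ket0^{\otimes m-n})|^2$, Lemma~\ref{lem:BScat} gives, for every $\bm p$ with $|\bm p|=n$,
\begin{equation}
    P_{\text{cat}}(\bm p)=\left(\frac{|\alpha|^2}{\sinh(|\alpha|^2)}\right)^{\!n}P_{\text{BS}}(\bm p\,|\,n).
\end{equation}
Since $\hat U$ preserves photon number (Eq.~(\ref{eq:quantum4})) and the input decomposes into sectors of total photon number $n,n+2,n+4,\dots$, the higher-photon components only feed outputs with $|\bm p|>n$; hence the conditional distribution $P_{\text{cat}}(\cdot\mid|\bm p|=n)$ equals exactly $P_{\text{BS}}(\cdot\,|\,n)$, and the conditioning event $\{|\bm p|=n\}$ carries probability $c_\alpha^n$ with $c_\alpha:=|\alpha|^2/\sinh(|\alpha|^2)\in(0,1]$. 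I would establish this proportionality first, as it is the bridge to AA's hardness results.

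For the exact case I would run the Stockmeyer argument of~\cite{Aaronson2013} on the hypothetical classical sampler: an exact polynomial-time sampler for $P_{\text{cat}}$ yields, via approximate counting, a multiplicative (relative-error) estimate of each $P_{\text{cat}}(\bm p)$ in $\textsf{BPP}^{\textsf{NP}}$. Because $c_\alpha^n$ is a fixed, efficiently computable constant, dividing recovers a multiplicative estimate of $P_{\text{BS}}(\bm p\,|\,n)=|\mathrm{Per}(U_{\bm p,\bm1\oplus\bm0})|^2/\bm p!$, and AA's result that multiplicatively estimating $|\mathrm{Per}(U_{\bm p,\bm1\oplus\bm0})|^2$ to relative error $1/\mathrm{poly}$ is $\#\textsf P$-hard (using the embedding of an arbitrary contraction into a unitary, valid for $m\ge2n$) forces $\textsf P^{\#\textsf P}\subseteq\textsf{BPP}^{\textsf{NP}}$ and the collapse. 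The point worth stressing is that the exponential smallness of $c_\alpha^n$ is harmless here, since Stockmeyer delivers relative error on each individual probability; this is why no restriction on $\alpha$ is needed in the exact case, and why the naive rejection-sampling reduction (which would take expected time $1/c_\alpha^n$) is avoided.

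For the approximate case I would instead use rejection sampling: given an efficient sampler producing $\tilde P$ with $\|\tilde P-P_{\text{cat}}\|_\text{TV}\le\eta$, draw samples and keep only those with $|\bm p|=n$. This implements $\tilde P(\cdot\mid|\bm p|=n)$, which by the standard conditioning bound satisfies $\|\tilde P(\cdot\mid|\bm p|=n)-P_{\text{BS}}(\cdot\,|\,n)\|_\text{TV}=\mathcal O(\eta/c_\alpha^n)$, and terminates in expected $\mathcal O(1/c_\alpha^n)$ trials. Choosing $\eta=\Theta(\delta\,c_\alpha^n)$ then outputs an approximate single-photon Boson Sampling distribution within TV distance $\delta$ in polynomial time, for the same $(m,n,U)$ and hence the same matrix-hiding regime, which is hard under AA's conjectures. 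The hard part, and the origin of the constraint on $|\alpha|$, is exactly this TV blow-up under post-selection: efficiency requires $c_\alpha^n\ge1/\mathrm{poly}(m)$, and using $c_\alpha=1-\Theta(|\alpha|^4)$ for small $|\alpha|$ gives $c_\alpha^n\approx e^{-\Theta(n|\alpha|^4)}$, so that $c_\alpha^n\ge1/\mathrm{poly}(m)$ translates into $n|\alpha|^4=\mathcal O(\log m)$, i.e.\ $|\alpha|=\mathcal O(n^{-1/4}\log^{1/4}m)$, precisely the hypothesis of the theorem.
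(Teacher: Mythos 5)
Your proposal is correct and follows essentially the same route as the paper: the exact case via the proportionality $P_{\text{cat}}(\bm p)=c_\alpha^n P_{\text{BS}}(\bm p\,|\,n)$ on the $n$-photon sector combined with the $\#\textsf P$-hardness of multiplicative estimation, and the approximate case via rejection sampling onto $\{|\bm p|=n\}$ with the TV-distance blow-up $\mathcal O(\eta/c_\alpha^n)$, leading to the requirement $c_\alpha^n\ge1/\mathrm{poly}(m)$ and hence $|\alpha|=\mathcal O(n^{-1/4}\log^{1/4}m)$ (the paper's Lemmas~\ref{lem:rejsamp} and~\ref{lem:amplitudecat}). Your observation that the exact case needs no amplitude restriction because Stockmeyer gives relative error on individual probabilities matches the paper's reasoning.
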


\noindent This result extends the arguments of~\cite{rohde2015evidence}---which gave formal proof of classical hardness in the exact sampling case---to the approximate sampling case.

\section{Discussion}
\label{sec:outlook}

In the next sections, we introduce quantum-inspired proofs of permanent identities. This approach allows us to give quantum-operational interpretations of seminal results, such as the MacMahon master theorem~\cite{macmahon2001combinatory}. In particular, we show that this theorem can be seen as two facets of the same bosonic Gaussian amplitude.

This quantum-inspired approach also yields a breadth of new permanent identities. We give some examples of combinatorial applications of these identities in section~\ref{sec:new} and we anticipate that they have many more.
Beyond these purely combinatorial applications, it would be interesting to investigate whether our new identities may be used to obtain more efficient classical algorithms for computing or estimating the permanent. In particular, our Theorem~\ref{th:gen} provides new estimators for the permanent which could be of interest, by minimizing the variance of these estimators over the choice of the analytic function $f$.

We use the formalism of linear optics with noninteracting bosons, but our approach can be applied more generally to linear optics with other types of particles.
For instance, we expect our proof techniques to lead to determinant identities in the fermionic case~\cite{terhal2002classical}, immanant identities in the case of partially distinguishable particles~\cite{shchesnovich2015partial,spivak2022generalized}, and additional permanent identities in the case of generalized bosons~\cite{kuo2022boson}.
Moreover, graphical languages are currently being developed for linear optical quantum computations~\cite{clement2022lov,de2022quantum}, which could lead to graphical proofs of remarkable identities in combination with our approach.

Finally, our Theorem~\ref{th:hardBScat} gives solid complexity-theoretic foundations for the hardness of Boson Sampling with input cat states. Generation of such states has progressed tremendously in the recent years, thanks to circuit QED in particular~\cite{peropadre2016proposal,girvin2017schrodinger,gu2017microwave}.
We hope that these foundations will motivate an experimental demonstration of quantum speedup based on Schr\"odinger's cat states.

\section{Quantum-inspired proofs of permanent identities}
\label{sec:qproofs}

In this section, we derive quantum-inspired proofs of existing permanent identities. Along the way, we obtain a generalization of Glynn's formula~\cite{glynn2010permanent} for the permanent of matrices with repeated rows and columns in Eq.~(\ref{eq:glynnbothrepeated}), as well as a generalization of the Glynn--Kan formula~\cite{huh2022fast} for the permanent of  matrices with repeated rows and columns in Eq.~(\ref{eq:glynnkanbothrepeated}).

\subsection{Glynn's formula}
\label{sec:Glynn}

Glynn's formula for the permanent of an $m\times m$ matrix $A=(a_{ij})_{1\le i,j\le m}$ is~\cite{glynn2010permanent}:
\begin{equation}
    \mathrm{Per}(A)=\frac1{2^{m-1}}\sum_{\substack{\bm x\in\{-1,1\}^m\\x_1=1}}x_1\dots x_m\prod_{i=1}^m\left(\sum_{j=1}^ma_{ij}x_j\right).
\end{equation}
By symmetry, it is equivalent to the identity
\begin{equation}
    \mathrm{Per}(A)=\frac1{2^m}\sum_{\bm x\in\{-1,1\}^m}x_1\dots x_m\prod_{i=1}^m\left(\sum_{j=1}^ma_{ij}x_j\right).
\end{equation}

\begin{proof}[Proof of Glynn's formula]
To prove this identity using quantum-mechanical tools, let us introduce the unnormalized cat state $\ket{\tilde{\mathrm{cat}}_\alpha}:=\frac1{2\alpha}(\ket\alpha-\ket{-\alpha})$, for $\alpha\in\mathbb C$. Using the Fock basis expansion of coherent states (\ref{eq:quantum1}), we have $\lim_{\alpha\rightarrow0}\ket{\tilde{\mathrm{cat}}_\alpha}=\ket1$ in trace distance. Hence, using the Fock basis expansion of $\hat U$ (\ref{eq:quantum3}),
\begin{equation}
    \mathrm{Per}(U)=\lim_{\alpha\rightarrow0}\langle1\dots1|\hat U|\tilde{\mathrm{cat}}_\alpha\dots\tilde{\mathrm{cat}}_\alpha\rangle,
\end{equation}
where $\hat U$ is a passive linear operation over $m$ modes with unitary matrix $U=(u_{ij})_{1\le i,j\le m}$.
We compute the right hand side of this equation:
\begin{align}\label{eq:Ucatoverlap}
        \nonumber\langle1\dots1|\hat U|\tilde{\mathrm{cat}}_\alpha\dots\tilde{\mathrm{cat}}_\alpha\rangle&=\frac1{(2\alpha)^m}\sum_{\bm x\in\{-1,1\}^m}x_1\dots x_m\langle1\dots1|\hat U|x_1\alpha\dots x_m\alpha\rangle\\
        &=\frac1{(2\alpha)^m}\sum_{\bm x\in\{-1,1\}^m}x_1\dots x_m\langle1\dots1|(\alpha U\bm x)_1\dots(\alpha U\bm x)_m\rangle\displaybreak\\
        \nonumber&=\frac1{(2\alpha)^m}\sum_{\bm x\in\{-1,1\}^m}x_1\dots x_me^{-\frac12|\alpha|^2\|U\bm x\|^2}\prod_{i=1}^m(\alpha U\bm x)_i\\
        \nonumber&=\frac{e^{-\frac m2|\alpha|^2}}{2^m}\sum_{\bm x\in\{-1,1\}^m}x_1\dots x_m\prod_{i=1}^m\left(\sum_{j=1}^mu_{ij}x_j\right),
\end{align}
where we used the action of $\hat U$ on coherent states (\ref{eq:quantum5}) in the second line, the Fock basis expansion of coherent states (\ref{eq:quantum1}) in the third line, and the fact that $U$ is unitary in the last line. Taking the limit when $\alpha\rightarrow0$ yields Glynn's formula for unitary matrices.
\end{proof}

With the same calculations, we may obtain a more general version of Eq.~(\ref{eq:Ucatoverlap}): for all $\alpha\in\mathbb C$ and $\bm p\in\mathbb N^m$,
\begin{equation}\label{eq:Ucatoverlapgen1}
    \begin{aligned}
        \langle p_1\dots p_m|\hat U|\tilde{\mathrm{cat}}_\alpha\dots\tilde{\mathrm{cat}}_\alpha\rangle&=\frac1{(2\alpha)^m}\sum_{\bm x\in\{-1,1\}^m}x_1\dots x_me^{-\frac12|\alpha|^2\|U\bm x\|^2}\prod_{i=1}^m\frac{\left[(\alpha U\bm x)_i\right]^{p_i}}{\sqrt{p_i!}}\\
        &=\frac{\alpha^{|\bm p|-m}e^{-\frac m2|\alpha|^2}}{2^m\sqrt{\bm p!}}\sum_{\bm x\in\{-1,1\}^m}x_1\dots x_m\prod_{i=1}^m\left(\sum_{j=1}^mu_{ij}x_j\right)^{\!p_i}\!\!\!\!\!.
    \end{aligned}
\end{equation}
Note that when $|\bm p|<m$ all products in the sum have at least one $x_i$ missing. Hence, by symmetry, $\langle p_1\dots p_m|\hat U|\tilde{\mathrm{cat}}_\alpha\dots\tilde{\mathrm{cat}}_\alpha\rangle=0$ when $|\bm p|<m$.
With the Fock basis expansion of $\hat U$ (\ref{eq:quantum3}), taking the limit when $\alpha\rightarrow0$ yields a version of Glynn's formula for unitary matrices with repeated rows:
\begin{equation}
    \mathrm{Per}(U_{\bm p,\bm 1})=\frac{\delta_{|\bm p|,m}}{2^m}\sum_{\bm x\in\{-1,1\}^m}x_1\dots x_m\prod_{i=1}^m\left(\sum_{j=1}^mu_{ij}x_j\right)^{\!p_i}\!\!\!\!\!,
\end{equation}
where $\delta$ is the Kronecker symbol.

So far, all identities are derived for unitary matrices $U$. In order to retrieve the same identity for a generic matrix $A=(a_{ij})_{1\le i,j\le n}$ of size $n$, we can embed $\frac1{\|A\|}A$ (or $A$ directly if $A=0$) as a submatrix of a unitary matrix $U$ of size $m=2n$~\cite[Lemma~29]{Aaronson2013} and compute:
\begin{equation}
    \frac{\mathrm{Per}(A_{\bm q,\bm 1})}{\sqrt{\bm q!}}=\|A\|^{|\bm q|}\lim_{\alpha\rightarrow0}(\bra{q_1\dots q_n}\otimes\bra0^{\otimes n})\hat U(\ket{\tilde{\mathrm{cat}}_\alpha}^{\otimes n}\otimes\ket 0^{\otimes n}).
\end{equation}
To do so, we compute a slightly more general inner product: for $\alpha\in\mathbb C$, $\bm p\in\mathbb N^m$ and $n\le m$,
\begin{equation}\label{eq:Ucatoverlapgen2}
    \begin{aligned}
        &\langle p_1\dots p_m|\hat U(\ket{\tilde{\mathrm{cat}}_\alpha}^{\otimes n}\otimes\ket 0^{\otimes m-n})\\
        &\quad\quad=\frac1{(2\alpha)^n}\sum_{\bm x\in\{-1,1\}^n}x_1\dots x_n\langle p_1\dots p_m|\hat U|x_1\alpha\dots x_n\alpha\,0\dots0\rangle\\
        &\quad\quad=\frac1{(2\alpha)^n}\sum_{\bm x\in\{-1,1\}^n}x_1\dots x_ne^{-\frac12|\alpha|^2\|U(\bm x\oplus\bm 0)\|^2}\prod_{i=1}^m\frac{\left[(\alpha U(\bm x\oplus0))_i\right]^{p_i}}{\sqrt{p_i!}}\\
        &\quad\quad=\frac{\alpha^{|\bm p|-n}e^{-\frac n2|\alpha|^2}}{2^n\sqrt{\bm p!}}\sum_{\bm x\in\{-1,1\}^n}x_1\dots x_n\prod_{i=1}^m\left(\sum_{j=1}^nu_{ij}x_j\right)^{\!p_i}\!\!\!\!\!.
    \end{aligned}
\end{equation}
In particular, setting $m=2n$ and $\bm p=\bm q\oplus\bm 0$ for $\bm q\in\mathbb N^n$,
\begin{equation}
    (\bra{q_1\dots q_n}\otimes\bra0^{\otimes n})\hat U(\ket{\tilde{\mathrm{cat}}_\alpha}^{\otimes n}\otimes\ket 0^{\otimes n})=\frac{\alpha^{|\bm q|-n}e^{-\frac n2|\alpha|^2}}{2^n\sqrt{\bm q!}}\!\!\!\!\sum_{\bm x\in\{-1,1\}^n}\!\!\!\!x_1\dots x_n\prod_{i=1}^n\left(\sum_{j=1}^nu_{ij}x_j\right)^{\!q_i}\!\!\!\!\!.
\end{equation}
Choosing $u_{ij}=\frac1{\|A\|}a_{ij}$ for $1\le i,j\le n$ and letting $\alpha$ go to $0$ proves the claim:
the left hand side converges to $\frac1{\sqrt{\bm q!}}\mathrm{Per}(U_{\bm q\oplus\bm0,\bm1\oplus\bm0})=\frac1{\|A\|^{|\bm q|}\sqrt{\bm q!}}\mathrm{Per}(A_{\bm q,\bm1})$, while the right hand side converges to $\frac{\delta_{|\bm q|,n}}{\|A\|^{|\bm q|}2^n\sqrt{\bm q!}}\sum_{\bm x\in\{-1,1\}^n}x_1\dots x_n\prod_{i=1}^n\left(\sum_{j=1}^na_{ij}x_j\right)^{\!q_i}$. Hence,
\begin{equation}\label{eq:glynnrepeated}
    \mathrm{Per}(A_{\bm q,\bm1})=\frac{\delta_{|\bm q|,n}}{2^n}\sum_{\bm x\in\{-1,1\}^n}x_1\dots x_n\prod_{i=1}^n\left(\sum_{j=1}^na_{ij}x_j\right)^{\!q_i}\!\!\!\!\!.
\end{equation}
A similar generalization of Glynn's formula for matrices with repeated rows (or columns) based on roots of unity has previously appeared in~\cite{aaronson2014generalizing}. 

Moreover, for any $m\times m$ matrix $A=(a_{i,j})_{1\le i,j\le m}$ and all $\bm p,\bm q\in\mathbb N^m$ with $|\bm p|=|\bm q|=n$, we have the formula
\begin{equation}\label{eq:glynnbothrepeated}
    \mathrm{Per}(A_{\bm p,\bm q})=\frac{\bm q!}{n^m}\sum_{\bm x\in\{1,e^{\frac{2i\pi}n},\dots,e^{\frac{2i(n-1)\pi}n}\}^m}\bm x^{-\bm q}(A\bm x)^{\bm p},
\end{equation}
which is a version of Glynn's formula for matrices with repeated rows \textit{and} columns. This formula, which has previously appeared in~\cite{chin2018generalized,yung2019universal,shchesnovich2020classical}, can be easily proven by expanding the product $(A\bm x)^{\bm p}=\prod_{i=1}^m\left(\sum_{j=1}^ma_{ij}x_j\right)^{p_i}$ and using properties of roots of unity. We give a quantum-inspired proof in what follows. 

For all $n\ge1$, all $q\le n$, and all $\alpha\in\mathbb C$, we define the following unnormalised superposition of coherent states:
\begin{equation}\label{eq:sigmaalphaqn}
    \ket{\sigma^{q,n}_\alpha}:=\frac{\sqrt{q!}}{n\alpha^q}\sum_{k=0}^{n-1}e^{-\frac{2ikq\pi}n}\ket{e^{\frac{2ik\pi}n}\alpha}.
\end{equation}
Using the definition of coherent states (\ref{eq:coh}), we obtain
\begin{equation}
    \begin{aligned}
        \ket{\sigma^{q,n}_\alpha}&=\frac{\sqrt{q!}e^{-\frac12|\alpha|^2}}{n\alpha^q}\sum_{p\ge0}\frac{\alpha^p}{\sqrt{p!}}\left(\sum_{k=0}^{n-1}e^{\frac{2ik(p-q)\pi}n}\right)\ket p\\
        &=\sqrt{q!}e^{-\frac12|\alpha|^2}\sum_{l\ge0}\frac{\alpha^{ln}}{\sqrt{(ln+q)!}}\ket{ln+q}.
    \end{aligned}
\end{equation}
where we used $\sum_{k=0}^{n-1}e^{\frac{2ik(p-q)\pi}n}=n$ if $p=q\mod n$ and $0$ otherwise in the second line. Hence, $\lim_{\alpha\rightarrow0}\ket{\sigma^{q,n}_\alpha}=\ket q$ in trace distance. Using the Fock basis expansion of $\hat U$ (\ref{eq:quantum3}), we thus have, for all $\bm p,\bm q\in\mathbb N^m$ with $|\bm p|=|\bm q|=n\ge1$,
\begin{equation}
    \frac{\mathrm{Per}(U_{\bm p,\bm q})}{\sqrt{\bm p!\bm q!}}=\lim_{\alpha\rightarrow0}\langle p_1\dots p_m|\hat U\bigotimes_{j=1}^m\ket{\sigma^{q_j,n}_\alpha},
\end{equation}
where $\hat U$ is a passive linear operation over $m$ modes with unitary matrix $U=(u_{ij})_{1\le i,j\le m}$.
We compute the right hand side of this equation:
\begin{equation}\label{eq:computesigma}
    \begin{aligned}
        \langle p_1\dots p_m|\hat U\bigotimes_{j=1}^m\ket{\sigma^{q_j,n}_\alpha}&=\frac{\sqrt{\bm q!}}{n^m\alpha^n}\sum_{k_1,\dots,k_m=0}^{n-1}\prod_{j=1}^me^{-\frac{2ik_jq_j\pi}n}\langle p_1\dots p_m|\hat U\bigotimes_{j=1}^m\ket{e^{\frac{2ik_j\pi}n}\alpha}\\
        &=\frac{\sqrt{\bm q!}}{n^m\alpha^n}\sum_{\bm x\in\{1,e^{\frac{2i\pi}n},\dots,e^{\frac{2i(n-1)\pi}n}\}^m}\!\!\bm x^{-\bm q}\langle p_1\dots p_m|\hat U\ket{\alpha x_1\dots\alpha x_m}\\
        &=\frac{\sqrt{\bm q!}e^{-\frac m2|\alpha|^2}}{\sqrt{\bm p!}n^m}\sum_{\bm x\in\{1,e^{\frac{2i\pi}n},\dots,e^{\frac{2i(n-1)\pi}n}\}^m}\!\!\bm x^{-\bm q}(U\bm x)^{\bm p}.
    \end{aligned}
\end{equation}
where we used the definition of $\ket{\sigma_\alpha^{q,n}}$ (\ref{eq:sigmaalphaqn}) in the first line and where the derivation for the last line is identical to that of Eqs.~(\ref{eq:Ucatoverlap}-\ref{eq:Ucatoverlapgen1}). Letting $\alpha$ go to $0$ proves Eq.~(\ref{eq:glynnbothrepeated}) for unitary matrices. Finally, the case of a generic nonzero $m\times m$ matrix $A$ is obtained as in Eq.~(\ref{eq:Ucatoverlapgen2}) by embedding $\frac1{\|A\|}A$ as a submatrix of a unitary matrix $U$ of size $2m$ and computing
\begin{equation}
    \left(\langle p_1\dots p_m|\otimes\bra0^{\otimes m}\right)\hat U\left(\bigotimes_{j=1}^m\ket{\sigma^{q_j,n}_\alpha}\otimes\ket0^{\otimes m}\right)\!.
\end{equation}

\subsection{The Glynn--Kan formula}

A symmetrized version of Glynn's formula for the permanent of an $m\times m$ matrix $A$ has been recently derived~\cite{huh2022fast} under the name Glynn--Kan formula:
\begin{equation}\label{eq:glynnkan}
    \mathrm{Per}(A)=\frac1{4^mm!}\sum_{\bm x,\bm y\in\{-1,1\}^m}x_1\dots x_my_1\dots y_m(\bm x^TA\bm y)^m.
\end{equation}
While the Glynn--Kan formula gives a slower classical algorithm than the Glynn formula for computing the permanent, it is motivated by a connection with quantum algorithms for estimating the permanent~\cite{huh2022fast}.

\begin{proof}[Proof of the Glynn--Kan formula]
To prove this identity using quantum-mechanical tools, we again use the unnormalized cat state $\ket{\tilde{\mathrm{cat}}_\alpha}:=\frac1{2\alpha}(\ket\alpha-\ket{-\alpha})$, as in the previous section.  Using the Fock basis expansion of coherent states (\ref{eq:quantum1}), we have $\lim_{\alpha\rightarrow0}\ket{\tilde{\mathrm{cat}}_\alpha}=\ket1$ in trace distance. Hence, using the Fock basis expansion of $\hat U$ (\ref{eq:quantum3}),
\begin{equation}
    \mathrm{Per}(U)=\lim_{\alpha\rightarrow0}\langle\tilde{\mathrm{cat}}_\alpha\dots\tilde{\mathrm{cat}}_\alpha|\hat U|\tilde{\mathrm{cat}}_\alpha\dots\tilde{\mathrm{cat}}_\alpha\rangle,
\end{equation}
where $\hat U$ is a passive linear operation over $m$ modes with unitary matrix $U$.
We compute the right hand side of this equation:
\begin{align}\label{eq:Ucatoverlap2}
            \nonumber&\langle\tilde{\mathrm{cat}}_\alpha\dots\tilde{\mathrm{cat}}_\alpha|\hat U|\tilde{\mathrm{cat}}_\alpha\dots\tilde{\mathrm{cat}}_\alpha\rangle\\
            \nonumber&\quad\quad=\frac1{4^m|\alpha|^{2m}}\sum_{\bm x,\bm y\in\{-1,1\}^m}x_1\dots x_my_1\dots y_m\langle x_1\alpha\dots x_m\alpha|\hat U|y_1\alpha\dots y_m\alpha\rangle\\
            \nonumber&\quad\quad=\frac1{4^m|\alpha|^{2m}}\sum_{\bm x,\bm y\in\{-1,1\}^m}x_1\dots x_my_1\dots y_m\langle x_1\alpha\dots x_m\alpha|(\alpha U\bm y)_1\dots(\alpha U\bm y)_m\rangle\\
            &\quad\quad=\frac1{4^m|\alpha|^{2m}}\sum_{\bm x,\bm y\in\{-1,1\}^m}x_1\dots x_my_1\dots y_me^{-\frac12|\alpha|^2\|\bm x\|^2}e^{-\frac12|\alpha|^2\|U\bm y\|^2}\prod_{i=1}^me^{x_i\alpha^*(\alpha U\bm y)_i}\\
            \nonumber&\quad\quad=\frac{e^{-m|\alpha|^2}}{4^m|\alpha|^{2m}}\sum_{\bm x,\bm y\in\{-1,1\}^m}x_1\dots x_my_1\dots y_me^{|\alpha|^2\bm x^TU\bm y}\\
            \nonumber&\quad\quad=\frac{e^{-m|\alpha|^2}}{4^m}\sum_{\bm x,\bm y\in\{-1,1\}^m}x_1\dots x_my_1\dots y_m\sum_{k=0}^{+\infty}\frac{|\alpha|^{2k-2m}}{k!}(\bm x^TU\bm y)^k,
\end{align}
where we used the action of $\hat U$ on coherent states (\ref{eq:quantum5}) in the third line, the overlap between coherent states (\ref{eq:quantum1+}) in the fourth line, and the fact that $U$ is unitary in the fifth line. For $k<m$, all products in the expansion of $(\bm x^TU\bm y)^k$ have at least one $x_i$ missing. Hence, by symmetry, the terms for $k<m$ vanish and we obtain
\begin{equation}\label{eq:Ucatoverlap2gen}
    \langle\tilde{\mathrm{cat}}_\alpha\dots\tilde{\mathrm{cat}}_\alpha|\hat U|\tilde{\mathrm{cat}}_\alpha\dots\tilde{\mathrm{cat}}_\alpha\rangle=\frac{e^{-m|\alpha|^2}}{4^m}\!\!\!\!\!\!\sum_{\bm x,\bm y\in\{-1,1\}^m}\!\!\!\!x_1\dots x_my_1\dots y_m\sum_{l=0}^\infty \frac{|\alpha|^{2l}}{(m+l)!}(\bm x^TU\bm y)^{m+l}.
\end{equation}
Finally, taking the limit when $\alpha$ goes to $0$ yields
\begin{equation}\label{eq:GKU}
    \mathrm{Per}(U)=\frac1{4^mm!}\sum_{\bm x,\bm y\in\{-1,1\}^m}x_1\dots x_my_1\dots y_m(\bm x^TU\bm y)^m.
\end{equation}
This proves the Glynn--Kan formula for a unitary matrix $U$. Once again, the proof extends straightforwardly to any matrix $A$ of size $n$ by embedding $\frac1{\|A\|}A$ as a submatrix of a unitary matrix $U$ of size $2n$ and computing $\lim_{\alpha\rightarrow0}(\bra{\tilde{\mathrm{cat}}_\alpha}^{\otimes n}\otimes\bra0^{\otimes n})\hat U(\ket{\tilde{\mathrm{cat}}_\alpha}^{\otimes n}\otimes\ket0^{\otimes n})$.
\end{proof}

Similar to Eq.~(\ref{eq:glynnbothrepeated}), we obtain a version of the Glynn--Kan formula for matrices with repeated rows and columns using roots of unity as
\begin{equation}\label{eq:glynnkanbothrepeated}
    \mathrm{Per}(A_{\bm p,\bm q})=\frac{\bm p!\bm q!}{n^{2m}n!}\sum_{\bm x,\bm y\in\{1,e^{\frac{2i\pi}n},\dots,e^{\frac{2i(n-1)\pi}n}\}^m}\bm x^{-\bm p}\bm y^{-\bm q}(\bm x^TA\bm y)^n,
\end{equation}
for any $m\times m$ matrix $A=(a_{ij})_{1\le i,j\le m}$ and all $\bm p,\bm q\in\mathbb N^m$ with $|\bm p|=|\bm q|=n$. This formula, which appears to be new, can be easily proven using our Corollary~\ref{coro:mmmt+} by expanding the product $(\bm x^TA\bm y)^n$ and using properties of roots of unity. We give a quantum-inspired proof in what follows. 

For all $n\ge1$, all $q\le n$, and all $\alpha\in\mathbb C$, we again use the unnormalised superposition of coherent states as in Eq.~(\ref{eq:sigmaalphaqn}): $\ket{\sigma^{q,n}_\alpha}:=\frac{\sqrt{q!}}{n\alpha^q}\sum_{k=0}^{n-1}e^{-\frac{2ikq\pi}n}\ket{e^{\frac{2ik\pi}n}\alpha}$. Recall that $\lim_{\alpha\rightarrow0}\ket{\sigma^{q,n}_\alpha}=\ket q$ in trace distance. Using the Fock basis expansion of $\hat U$ (\ref{eq:quantum3}), we thus have, for all $\bm p,\bm q\in\mathbb N^m$ with $|\bm p|=|\bm q|=n\ge1$,
\begin{equation}
    \frac{\mathrm{Per}(U_{\bm p,\bm q})}{\sqrt{\bm p!\bm q!}}=\lim_{\alpha\rightarrow0}\bigotimes_{j=1}^m\bra{\sigma^{p_j,n}_\alpha}\hat U\bigotimes_{j=1}^m\ket{\sigma^{q_j,n}_\alpha},
\end{equation}
where $\hat U$ is a passive linear operation over $m$ modes with unitary matrix $U=(u_{ij})_{1\le i,j\le m}$.
We compute the right hand side of this equation:
\begin{equation}
    \begin{aligned}
        \bigotimes_{j=1}^m\bra{\sigma^{p_j,n}_\alpha}\hat U\bigotimes_{j=1}^m\ket{\sigma^{q_j,n}_\alpha}&=\frac{\sqrt{\bm p!\bm q!}}{n^{2m}|\alpha|^{2n}}\sum_{\substack{k_1,\dots,k_m=0\\l_1,\dots l_m=0}}^{n-1}\prod_{j=1}^me^{-\frac{2ik_jp_j\pi}n}e^{-\frac{2il_jq_j\pi}n}\bigotimes_{j=1}^m\bra{e^{\frac{2ik_j\pi}n}\alpha}\hat U\bigotimes_{j=1}^m\ket{e^{\frac{2il_j\pi}n}\alpha}\\
        &=\frac{\sqrt{\bm p!\bm q!}}{n^{2m}|\alpha|^{2n}}\sum_{\bm x,\bm y\in\{1,e^{\frac{2i\pi}n},\dots,e^{\frac{2i(n-1)\pi}n}\}^m}\!\!\bm x^{-\bm p}\bm y^{-\bm q}\langle x_1\alpha\dots x_m\alpha|\hat U|y_1\alpha\dots y_m\alpha\rangle\\
        &=\frac{\sqrt{\bm p!\bm q!}e^{-m|\alpha|^2}}{n^{2m}}\sum_{\bm x\in\{1,e^{\frac{2i\pi}n},\dots,e^{\frac{2i(n-1)\pi}n}\}}\!\!\bm x^{-\bm p}\bm y^{-\bm q}\sum_{l=0}^\infty \frac{|\alpha|^{2l}}{(n+l)!}(\bm x^TU\bm y)^{n+l}.
    \end{aligned}
\end{equation}
where we used the definition of $\ket{\sigma_\alpha^{q,n}}$ (\ref{eq:sigmaalphaqn}) in the first line and where the derivation for the last line is identical to that of Eqs.~(\ref{eq:Ucatoverlap2}-\ref{eq:Ucatoverlap2gen}). Letting $\alpha$ go to $0$ proves Eq.~(\ref{eq:glynnkanbothrepeated}) for unitary matrices. 
Once again, the proof extends straightforwardly to any nonzero matrix $A$ of size $m$ by embedding $\frac1{\|A\|}A$ as a submatrix of a unitary matrix $U$ of size $2m$ and computing
\begin{equation}
    \left(\bigotimes_{j=1}^m\bra{\sigma^{p_j,n}_\alpha}\otimes\bra0^{\otimes m}\right)\hat U\left(\bigotimes_{j=1}^m\ket{\sigma^{q_j,n}_\alpha}\otimes\ket0^{\otimes m}\right)\!.
\end{equation}

\subsection{The Cauchy--Binet theorem}

The Cauchy--Binet theorem for the permanent expresses the permanent of the product of two $m\times m$ matrices $A$ and $B$ as a sum of products involving permanents of submatrices of these two matrices~\cite{minc1984permanents,percus2012combinatorial}: for all $\bm p,\bm q\in\mathbb N^m$,
\begin{equation}\label{eq:percomposition}
    \mathrm{Per}((AB)_{\bm p,\bm q})=\sum_{\bm k\in\mathbb N^m}\frac1{\bm k!}\mathrm{Per}(A_{\bm p,\bm k})\mathrm{Per}(B_{\bm k,\bm q}).
\end{equation}

\begin{proof}[Proof of the Cauchy--Binet theorem]
Using the Fock basis expansion of passive linear operations (\ref{eq:quantum3}) gives a quick quantum-inspired proof of this identity: for $U$ and $V$ two $m\times m$ unitary matrices and for all $\bm p,\bm q\in\mathbb N^m$,
\begin{align}
        \nonumber\mathrm{Per}((UV)_{\bm p,\bm q})&=\sqrt{\bm p!\bm q!}\bra{\bm p}\widehat{UV}\ket{\bm q}\\
        \nonumber&=\sqrt{\bm p!\bm q!}\bra{\bm p}\hat U\hat V\ket{\bm q}\\
        &=\sqrt{\bm p!\bm q!}\bra{\bm p}\hat U\left(\sum_{\bm k\in\mathbb N^m}\ket{\bm k}\!\bra{\bm k}\right)\hat V\ket{\bm q}\\
        \nonumber&=\sum_{\bm k\in\mathbb N^m}\sqrt{\bm p!\bm q!}\bra{\bm p}\hat U\ket{\bm k}\!\bra{\bm k}\hat V\ket{\bm q}\\
        \nonumber&=\sum_{\bm k\in\mathbb N^m}\frac1{\bm k!}\mathrm{Per}(U_{\bm p,\bm k})\mathrm{Per}(V_{\bm k,\bm q}),
\end{align}
where we used the fact that Fock states form a basis in the third line and where we used Eq.~(\ref{eq:quantum3}) once in the first line and twice in the last line. 

In order to retrieve the formula for generic matrices $A$ and $B$, we can embed $\frac1{\|A\|}A$ as a submatrix of a unitary matrix $U$ and $\frac1{\|B\|}B$ as a submatrix of a unitary matrix $V$ and compute $\mathrm{Per}((UV)_{\bm p\oplus\bm 0,\bm q\oplus\bm 0})$.
\end{proof}

\subsection{Generating functions}
\label{sec:pergen}

The permanent may be seen as a monomial coefficient in the Taylor expansion of various functions~\cite{minc1984permanents,percus2012combinatorial}. For example, the permanent of an $m\times m$ matrix $A=(a_{ij})_{1\le i,j\le m}$ with rows repeated according to $\bm p\in\mathbb N^m$ and columns repeated according to $\bm q\in\mathbb N^m$ is given by the coefficient of $\bm z^{\bm q}=z_1^{q_1}\dots z_m^{q_m}$ in
\begin{equation}
    \bm q!(A\bm z)^{\bm p}=\prod_{i=1}^mq_i!\left(\sum_{j=1}^ma_{ij}z_j\right)^{\!p_i}\!\!\!\!\!.
\end{equation}
This may be thought of as the `monomial version' of Glynn's formula in Eq.~(\ref{eq:glynnrepeated}). Formally:
\begin{equation}\label{eq:permonom}
        \sum_{\bm q\in\mathbb N^m}\frac{\bm z^{\bm q}}{\bm q!}\mathrm{Per}(A_{\bm p,\bm q})=(A\bm z)^{\bm p},
\end{equation}
for formal variables $\bm z=(z_1,\dots,z_m)$.

\begin{proof}[Proof of `Glynn's monomial formula']
To prove this relation with quantum mechanical tools, we fix $\bm p\in\mathbb N^m$, $\bm\alpha\in\mathbb C^m$, a unitary matrix $U$ of size $m$, and we compute:
\begin{align}
        \nonumber\sum_{\bm q\in\mathbb N^m}\frac{\bm\alpha^{\bm q}}{\bm q!}\mathrm{Per}(U_{\bm p,\bm q})&=\sqrt{\bm p!}e^{\frac12\|\bm\alpha\|^2}\bra{\bm p}\hat U\left(\sum_{\bm q\in\mathbb N^m}\ket{\bm q}\!\bra{\bm q}\right)\ket{\bm\alpha}\displaybreak\\
        \nonumber&=\sqrt{\bm p!}e^{\frac12\|\bm\alpha\|^2}\bra{\bm p}\hat U\ket{\bm\alpha}\\
        &=\sqrt{\bm p!}e^{\frac12\|\bm\alpha\|^2}\bra{\bm p}(U\bm\alpha)\rangle\\
        \nonumber&=e^{\frac12\|\bm\alpha\|^2-\frac12\|U\bm\alpha\|^2}(U\bm\alpha)^{\bm p}\\
        \nonumber&=(U\bm\alpha)^{\bm p},
\end{align}
where we used the Fock basis expansions of coherent states (\ref{eq:quantum1}) and of $\hat U$ (\ref{eq:quantum3}) in the first line, the fact that Fock states form a basis in the second line, the action of $\hat U$ on coherent states (\ref{eq:quantum5}) in the third line, Eq.~(\ref{eq:quantum1}) again in the fourth line, and the fact that $U$ is unitary in the last line. Once again, the relation for a generic nonzero matrix $A$ is obtained by embedding $\frac1{\|A\|}A$ as a submatrix of a unitary matrix $U$.
\end{proof}

Another generating function for the permanent is due to Jackson~\cite{jackson1977unification}: for formal variables $\bm x=(x_1,\dots,x_m),\bm y=(y_1,\dots,y_m)$,
\begin{equation}
    \sum_{\bm p,\bm q\in\mathbb N^m}\frac{\bm x^{\bm p}\bm y^{\bm q}}{\bm p!\bm q!}\mathrm{Per}(A_{\bm p,\bm q})=e^{\bm x^TA\bm y}.
\end{equation}
Note that this relation implies Corollary~\ref{coro:mmmt+}, by expanding the Taylor series of the exponential and considering the $\bm x^{\bm p}\bm y^{\bm q}$ coefficient when $|\bm p|=|\bm q|=n\in\mathbb N^*$ (an alternative proof of this result is given in the next section):
\begin{equation}
    \mathrm{Per}(A_{\bm p,\bm q})=\frac{\bm p!\bm q!}{n!}[\bm x^{\bm p}\bm y^{\bm q}](\bm x^TA\bm y)^n.
\end{equation}

\begin{proof}[Proof of Jackson's formula]
Jackson's formula may be derived using quantum mechanical tools in a similar way: for $\bm\alpha,\bm\beta\in\mathbb C^m$ and a unitary matrix $U$,
\begin{equation}
    \begin{aligned}
        \sum_{\bm p,\bm q\in\mathbb N^m}\frac{\bm\alpha^{\bm p}\bm\beta^{\bm q}}{\bm p!\bm q!}\mathrm{Per}(U_{\bm p,\bm q})&=e^{\frac12\|\bm\alpha\|^2+\frac12\|\bm\beta\|^2}\bra{\bm\alpha^*}\left(\sum_{\bm p\in\mathbb N^m}\ket{\bm p}\!\bra{\bm p}\right)\hat U\left(\sum_{\bm q\in\mathbb N^m}\ket{\bm q}\!\bra{\bm q}\right)\ket{\bm\beta}\\
        &=e^{\frac12\|\bm\alpha\|^2+\frac12\|\bm\beta\|^2}\bra{\bm\alpha^*}\hat U\ket{\bm\beta}\\
        &=e^{\frac12\|\bm\alpha\|^2+\frac12\|\bm\beta\|^2}\bra{\bm\alpha^*}U\bm\beta\rangle\\
        &=e^{\frac12\|\bm\beta\|^2-\frac12\|U\bm\beta\|^2}e^{\bm\alpha^TU\bm\beta}\\
        &=e^{\bm\alpha^TU\bm\beta},
    \end{aligned}
\end{equation}
where we used the Fock basis expansions of coherent states (\ref{eq:quantum1}) and of $\hat U$ (\ref{eq:quantum3}) in the first line, the fact that Fock states form a basis in the second line, the action of $\hat U$ on coherent states (\ref{eq:quantum5}) in the third line, the overlap between coherent states (\ref{eq:quantum1+}) in the fourth line, and the fact that $U$ is unitary in the last line. Once again, the relation for a generic nonzero matrix $A$ is obtained by embedding $\frac1{\|A\|}A$ as a submatrix of a unitary matrix $U$.
\end{proof}

We conclude this section with arguably one of the most remarkable permanent identities, the MacMahon master theorem~\cite{macmahon2001combinatory}, which relates the permanent and the determinant through a generating function (see section~\ref{sec:mmmt}):
\begin{equation}\label{eq:mmmt}
    \sum_{\bm p\in\mathbb N^m}\frac{\bm z^{\bm p}}{\bm p!}\mathrm{Per}(A_{\bm p,\bm p})=\frac1{\mathrm{Det}(I-ZA)},
\end{equation}
where $Z=\mathrm{Diag}(\bm z)$, with $\bm z=(z_1,\dots,z_m)$ formal variables.

\begin{proof}[Proof of the MacMahon master theorem]
To prove this relation with quantum mechanical tools, we show that the MacMahon master theorem describes two different ways of computing an inner product between two Gaussian states.

For $\lambda\in\mathbb C$ and $|\lambda|<1$, we make use of (unnormalized) two-mode squeezed states of the form $\ket{\lambda}=\sum_{p\ge0}\lambda^p\ket{pp}$. We write $\ket{\bm\lambda}$ with $\bm\lambda=(\lambda_1,\dots,\lambda_m)\in\mathbb C^m$ a tensor product of $m$ two-mode squeezed states (note that we are associating mode $1$ with mode $m+1$, mode $2$ with mode $m+2$, and so on).
For $U$ a unitary matrix and $\bm\lambda,\bm\mu\in\mathbb C^m$, with $|\lambda_k|<1$ and $|\mu_k|<1$ for all $k\in\{1,\dots,m\}$, we compute, using the Fock basis expansion of $\hat U$ (\ref{eq:quantum3}):
\begin{equation}\label{eq:mmmt1}
        \sum_{\bm p\in\mathbb N^m}\frac{\bm\lambda^{\bm p}\bm\mu^{\bm p}}{\bm p!}\mathrm{Per}(U_{\bm p,\bm p})=\sum_{\bm p\in\mathbb N^m}\bm\lambda^{\bm p}\bm\mu^{\bm p}\bra{\bm p}\hat U\ket{\bm p}.
\end{equation}
Using the Fock basis expansion of two-mode squeezed states (\ref{eq:quantum2}), we have
\begin{equation}
    \bm\lambda^{\bm p}=\bra{\bm\lambda^*}\bm p\bm p\rangle.
\end{equation}
Moreover, a quick computation in Fock basis shows that
\begin{equation}
    \bm\mu^{\bm p}\bra{\bm p}\hat U\ket{\bm p}=\bra{\bm p\bm p}\hat I\otimes\hat U\ket{\bm\mu}.
\end{equation}
Hence, Eq.~(\ref{eq:mmmt1}) rewrites
\begin{equation}\label{eq:mmmt2}
    \begin{aligned}
        \sum_{\bm p\in\mathbb N^m}\frac{\bm\lambda^{\bm p}\bm\mu^{\bm p}}{\bm p!}\mathrm{Per}(U_{\bm p,\bm p})&=\sum_{\bm p\in\mathbb N^m}\bra{\bm\lambda^*}\bm p\bm p\rangle\!\bra{\bm p\bm p}\hat I\otimes\hat U\ket{\bm\mu}\\
        &=\bra{\bm\lambda^*}\left(\sum_{\bm p\in\mathbb N^m}\ket{\bm p\bm p}\!\bra{\bm p\bm p}\right)\hat I\otimes\hat U\ket{\bm\mu}\\
        &=\bra{\bm\lambda^*}\hat I\otimes\hat U\ket{\bm\mu},
    \end{aligned}
\end{equation}
where in the last line we used $\bra{\bm\lambda^*}\left(\sum_{\bm p\in\mathbb N^m}\ket{\bm p\bm p}\!\bra{\bm p\bm p}\right)=\bra{\bm\lambda^*}$, which can be checked in Fock basis. 

We now compute the Gaussian inner product $\bra{\bm\lambda^*}\hat I\otimes\hat U\ket{\bm\mu}$. From the action of passive linear operations on creation operators (\ref{eq:actioncrea}), the unitary matrix associated to the passive linear operation $\hat I\otimes\hat U$ is $I\oplus U$. With the Gaussian overlap derived in Eqs.~(\ref{eq:quantum6}-\ref{eq:Vw}), we thus obtain:
\begin{equation}\label{eq:mmmt3}
    \bra{\bm\lambda^*}\hat I\otimes\hat U\ket{\bm\mu}=\frac1{\sqrt{\mathrm{Det}(V_{I\oplus U}(\bm\lambda,\bm\mu))}},
\end{equation}
where
\begin{equation}
    V_{I\oplus U}(\bm\lambda,\bm\mu)=\begin{pmatrix}-(I\oplus U)V_{\bm\mu}(I\oplus U)^T&I_{2m}\\I_{2m}&-V_{\bm\lambda}\end{pmatrix},
\end{equation}
where for all $\bm w\in\mathbb C^m$,
\begin{equation}
    V_{\bm w}=\begin{pmatrix}0_m&\mathrm{Diag}(\bm w)\\\mathrm{Diag}(\bm w)&0_m\end{pmatrix}.
\end{equation}
We have
\begin{align}\label{eq:detV}
    \nonumber\!\mathrm{Det}(V_{I\oplus U}(\bm\lambda,\bm\mu))&=\mathrm{Det}(I-(I\oplus U)V_{\bm\mu}(I\oplus U)^TV_{\bm\lambda})\\
    \nonumber&=\mathrm{Det}\!\left(I-(I\oplus U)\!\begin{pmatrix}0_m&\mathrm{Diag}(\bm\mu)\\\mathrm{Diag}(\bm\mu)&0_m\end{pmatrix}\!(I\oplus U)^T\!\!\begin{pmatrix}0_m&\mathrm{Diag}(\bm\lambda)\\\mathrm{Diag}(\bm\lambda)&0_m\end{pmatrix}\!\right)\!\!\!\\
    &=\mathrm{Det}\!\begin{pmatrix}I-\mathrm{Diag}(\bm\mu)U^T\mathrm{Diag}(\bm\lambda)&0_m\\0_m&I-U\mathrm{Diag}(\bm\mu)\mathrm{Diag}(\bm\lambda)\end{pmatrix}\displaybreak\\
    \nonumber&=\mathrm{Det}(I-\mathrm{Diag}(\bm\mu)U^T\mathrm{Diag}(\bm\lambda))\,\mathrm{Det}(I-U\mathrm{Diag}(\bm\mu)\mathrm{Diag}(\bm\lambda))\\
    \nonumber&=\mathrm{Det}(I-\mathrm{Diag}(\bm\lambda)\mathrm{Diag}(\bm\mu)U)^2,
\end{align}
where the last line is obtained by using Sylvester's determinant theorem $\mathrm{Det}(I+MN)=\mathrm{Det}(I+NM)$, together with the transpose for the first determinant and $\mathrm{Diag}(\bm\mu)\mathrm{Diag}(\bm\lambda)=\mathrm{Diag}(\bm\lambda)\mathrm{Diag}(\bm\mu)$ for the second determinant. Combining Eqs.~(\ref{eq:mmmt2}), (\ref{eq:mmmt3}) and~(\ref{eq:detV}), we obtain, $\bm\lambda,\bm\mu\in\mathbb C^m$, with $|\lambda_k|<1$ and $|\mu_k|<1$ for all $k\in\{1,\dots,m\}$,
\begin{equation}
    \sum_{\bm p\in\mathbb N^m}\frac{\bm\lambda^{\bm p}\bm\mu^{\bm p}}{\bm p!}\mathrm{Per}(U_{\bm p,\bm p})=\bra{\bm\lambda^*}\hat I\otimes\hat U\ket{\bm\mu}=\frac1{\mathrm{Det}(I-\mathrm{Diag}(\bm\lambda)\mathrm{Diag}(\bm\mu)U)},
\end{equation}
where we used the value at $\bm\lambda=\bm\mu=0$ and the fact that the left hand side is a continuous function of $\bm\lambda$ and $\bm\mu$ to determine the sign of the square root.
Replacing $(\lambda_1\mu_1,\dots,\lambda_m\mu_m)$ by the formal variables $\bm z=(z_1,\dots,z_m)$ concludes the proof. The relation for a generic nonzero matrix $A$ of size $n$ is obtained by embedding $\frac1{\|A\|}A$ as a submatrix of a unitary matrix $U$ of size $2n$ and taking $\bm z=(z_1,\dots,z_n,0,\dots,0)$.
\end{proof}

\section{New quantum-inspired permanent identities}
\label{sec:new}

In this section, we derive new quantum-inspired identities for the permanent and we give some combinatorial applications of these identities.

\subsection{Generalizations of the MacMahon master theorem}

In this section, we introduce new generalizations of the MacMahon master theorem (see section~\ref{sec:mmmt}). 

We first derive new quantum-inspired identities involving the permanent of even-sized matrices: for $M$ a $(2m)\times(2m)$ matrix,
\begin{equation}\label{eq:new1}
    \mathrm{Per}(M)=[z^m]\left(\frac1{4^m}\sum_{\bm x,\bm y\in\{-1,1\}^m}\frac{x_1\dots x_my_1\dots y_m}{\sqrt{\mathrm{Det}(I-zV_{\bm x}MV_{\bm y}M^T)}}\right),
\end{equation}
and, for formal variables $\bm x=(x_1,\dots,x_m),\bm y=(y_1,\dots,y_m)$,
\begin{equation}\label{eq:new2}
    \sum_{\bm p,\bm q\in\mathbb N^m}\frac{\bm x^{\bm p}\bm y^{\bm q}}{\bm p!\bm q!}\mathrm{Per}(M_{\bm p\oplus\bm p,\bm q\oplus\bm q})=\frac1{\sqrt{\mathrm{Det}(I-V_{\bm x}MV_{\bm y}M^T)}},
\end{equation}
where for all $\bm w=(w_1,\dots,w_m)$,
\begin{equation}
    V_{\bm w}=\begin{pmatrix}0_m&\mathrm{Diag}(\bm w)\\\mathrm{Diag}(\bm w)&0_m\end{pmatrix}.
\end{equation}
As a direct consequence, when $M=A\oplus B$, with $A$ and $B$ two $m\times m$ matrices, we obtain Theorem~\ref{th:mmmt+}:
\begin{equation}\label{eq:new3}
    \sum_{\bm p,\bm q\in\mathbb N^m}\frac{\bm x^{\bm p}\bm y^{\bm q}}{\bm p!\bm q!}\mathrm{Per}(A_{\bm p,\bm q})\mathrm{Per}(B_{\bm p,\bm q})=\frac1{\mathrm{Det}(I-XAYB^T)},
\end{equation}
where $X=\mathrm{Diag}(x_1,\dots,x_m)$ and $Y=\mathrm{Diag}(y_1,\dots,y_m)$. 
In particular, $\mathrm{Per}(A)\mathrm{Per}(B)$ is given by the $x_1\dots x_my_1\dots y_m$ coefficient of $1/\mathrm{Det}(I-XAYB^T)$. As a consequence, we obtain Corollary~\ref{coro:mmmt+}: for all $n\in\mathbb N^*$ and all $\bm p,\bm q\in\mathbb N^m$ with $|\bm p|=|\bm q|=n$,
\begin{equation}\label{eq:new5}
    \mathrm{Per}(A_{\bm p,\bm q})=\frac{\bm p!\bm q!}{n!}[\bm x^{\bm p}\bm y^{\bm q}]\left(\bm x^TA\bm y\right)^n.
\end{equation}
We also generalize Eq.~(\ref{eq:new3}) to the case of $N$ matrices to obtain Theorem~\ref{th:mmmt++}: for all $n\times n$ matrices $A^{(1)},\dots,A^{(N)}$,
\begin{equation}\label{eq:new6}
    \begin{aligned}
        &\sum_{\bm p_1,\dots,\bm p_N\in\mathbb N^m}\prod_{k=1}^N\frac{\bm z_k^{\bm p_k}}{\bm p_k!}\mathrm{Per}(A^{(1)}_{\bm p_1,\bm p_2})\mathrm{Per}(A^{(2)}_{\bm p_2,\bm p_3})\dots\mathrm{Per}(A^{(N)}_{\bm p_N,\bm p_1})\\
        &\quad\quad\quad\quad\quad\quad=\frac1{\mathrm{Det}(I-Z_1A^{(1)}\dots Z_NA^{(N)})},
    \end{aligned}
\end{equation}
where $\bm z_k=(z_{k1},\dots,z_{km})$ are formal variables and $Z_k=\mathrm{Diag}(\bm z_k)$ for all $k\in\{1,\dots,N\}$.

\medskip

\begin{proof}[Proofs of Theorem~\ref{th:mmmt+}, Corollary~\ref{coro:mmmt+}, and Theorem~\ref{th:mmmt++}] In section~\ref{sec:Glynn}, we have obtained a linear optical proof of Glynn's formula for the permanent using the fact that a cat state $\ket{\mathrm{cat}_\alpha}$ of small amplitude $\alpha$ approximates a single-photon Fock state $\ket 1$.
Hereafter, we consider another approximation of Fock states using superpositions of two-mode squeezed states. Such superpositions were recently studied in~\cite{cardoso2021superposition} in the context of quantum sensing. 

To prove the identity in Eq.~(\ref{eq:new1}), we introduce $\ket{\lambda^-}:=\frac1{2\lambda}\left(\ket{\lambda}-\ket{-\lambda}\right)$, for $|\lambda|<1$, where $\ket{\lambda}=\sum_{p\ge0}\lambda^p\ket{pp}$ is an unnormalized two-mode squeezed state.
We have $\lim_{\lambda\rightarrow0}\ket{\lambda^-}=\ket{11}$ in trace distance. Hence, for $U$ a $(2m)\times(2m)$ unitary matrix we have with the Fock basis expansion of $\hat U$ (\ref{eq:quantum3}):
\begin{equation}
    \mathrm{Per}(U)=\lim_{\lambda\rightarrow0}\prescript{\otimes m}{}{\bra{\lambda^-}}\hat U\ket{\lambda^-}^{\otimes m}.
\end{equation}
For all $\lambda\in\mathbb R$, with $|\lambda|<1$, let us compute
\begin{equation}\label{eq:perTMSS}
    \prescript{\otimes m}{}{\bra{\lambda^-}}\hat U\ket{\lambda^-}^{\otimes m}=\frac1{4^m\lambda^{2m}}\sum_{\bm x,\bm y\in\{-1,1\}^m}x_1\dots x_my_1\dots y_mg_U(\lambda\bm x,\lambda\bm y),
\end{equation}
where we have defined for all $\bm\lambda,\bm\mu\in\mathbb C^m$ (note that we associate mode $1$ with mode $m+1$, mode $2$ with mode $m+2$, and so on)
\begin{equation}\label{eq:defgU}
        g_U(\bm\lambda,\bm\mu):=\bra{\bm\lambda^*}\hat U\ket{\bm\mu}.
\end{equation}
With the Gaussian overlap from Eq.~(\ref{eq:quantum6}), we have
\begin{equation}\label{eq:gU0}
        g_U(\bm\lambda,\bm\mu)=\frac1{\sqrt{\mathrm{Det}(V_U(\bm\lambda,\bm\mu))}},
\end{equation}
where
\begin{equation}
    V_U(\bm\lambda,\bm\mu)=\begin{pmatrix}-UV_{\bm\mu}U^T&I_{2m}\\I_{2m}&-V_{\bm\lambda}\end{pmatrix},
\end{equation}
where for all $\bm w\in\mathbb C^m$,
\begin{equation}
    V_{\bm w}=\begin{pmatrix}0_m&\mathrm{Diag}(\bm w)\\\mathrm{Diag}(\bm w)&0_m\end{pmatrix}.
\end{equation}
Moreover,
\begin{equation}
    \mathrm{Det}(V_U(\bm\lambda,\bm\mu))=\mathrm{Det}(I-V_{\bm\lambda}UV_{\bm\mu}U^T),
\end{equation}
so with Eq.~(\ref{eq:gU0})
\begin{equation}\label{eq:gU1}
    g_U(\bm\lambda,\bm\mu)=\frac1{\sqrt{\mathrm{Det}(I-V_{\bm\lambda}UV_{\bm\mu}U^T)}}.
\end{equation}
Setting $\bm\lambda=\bm\mu=\lambda\bm1$ for $\lambda\in\mathbb R$ and letting $\lambda$ go to $0$ in Eq.~(\ref{eq:perTMSS}), we obtain that $\mathrm{Per}(U)$ is given by
\begin{equation}
    \frac1{4^m}\!\sum_{\bm x,\bm y\in\{-1,1\}^m}x_1\dots x_my_1\dots y_m[\lambda^{2m}]g_U(\lambda\bm x,\lambda\bm y),
\end{equation}
where $[\lambda^{2m}]g_U(\lambda\bm x,\lambda\bm y)$ is the $\lambda^{2m}$ coefficient in the Taylor expansion (in $\lambda$) of
\begin{equation}
    g_U(\lambda\bm x,\lambda\bm y)=\frac1{\sqrt{\mathrm{Det}(I-\lambda^2V_{\bm x}UV_{\bm y}U^T)}}.
\end{equation}
This concludes the proof of Eq.~(\ref{eq:new1}). Once again, the relation for a generic nonzero matrix $A$ is obtained by embedding $\frac1{\|A\|}A$ as a submatrix of a unitary matrix $U$.

Now with Eq.~(\ref{eq:defgU}), for all $\bm\lambda,\bm\mu\in\mathbb C^n$,
\begin{equation}\label{eq:gU2}
    \begin{aligned}
        g_U(\bm\lambda,\bm\mu)&=\sum_{\bm p,\bm q\in\mathbb N^m}\bm\lambda^{\bm p}\bm\mu^{\bm q}\bra{\bm p\bm p}\hat U\ket{\bm q\bm q}\\
        &=\sum_{\bm p,\bm q\in\mathbb N^m}\frac{\bm\lambda^{\bm p}\bm\mu^{\bm q}}{\bm p!\bm q!}\mathrm{Per}(U_{\bm p\oplus\bm p,\bm q\oplus\bm q}),
    \end{aligned}
\end{equation}
where we used the Fock basis expansion of $\hat U$(\ref{eq:quantum3}) in the second line.
Combining Eqs.~(\ref{eq:defgU}), (\ref{eq:gU1}) and~(\ref{eq:gU2}) we obtain
\begin{equation}\label{eq:mmmttmssgen}
    \sum_{\bm p,\bm q\in\mathbb N^m}\frac{\bm\lambda^{\bm p}\bm\mu^{\bm q}}{\bm p!\bm q!}\mathrm{Per}(U_{\bm p\oplus\bm p,\bm q\oplus\bm q})=\bra{\bm\lambda^*}\hat U\ket{\bm\mu}=\frac1{\sqrt{\mathrm{Det}(I-V_{\bm\lambda}UV_{\bm\mu}U^T)}},
\end{equation}
where for all $\bm w\in\mathbb C^n$,
\begin{equation}
    V_{\bm w}=\begin{pmatrix}0_m&\mathrm{Diag}(\bm w)\\\mathrm{Diag}(\bm w)&0_m\end{pmatrix}.
\end{equation}
This concludes the proof of Eq.~(\ref{eq:new2}). Once again, the relation for a generic nonzero matrix $A$ is obtained by embedding $\frac1{\|A\|}A$ as a submatrix of a unitary matrix $U$.

When $U=A\oplus B$, with $A$ and $B$ two $m\times m$ matrices, we have $U_{\bm p\oplus\bm p,\bm q\oplus\bm q}=A_{\bm p,\bm q}\oplus B_{\bm p,\bm q}$ and the permanent of a block-diagonal matrix is the product of the permanents of the blocks, so Eq.~(\ref{eq:mmmttmssgen}) gives
\begin{equation}\label{eq:mmmttmssgenMN}
        \sum_{\bm p,\bm q\in\mathbb N^m}\frac{\bm x^{\bm p}\bm y^{\bm q}}{\bm p!\bm q!}\mathrm{Per}(A_{\bm p,\bm q})\mathrm{Per}(B_{\bm p,\bm q})=\frac1{\sqrt{\mathrm{Det}(I-V_{\bm x}(A\oplus B)V_{\bm y}(A^T\oplus B^T))}}.
\end{equation}
Writing $X=\mathrm{Diag}(x_1,\dots,x_m)$ and $Y=\mathrm{Diag}(y_1,\dots,y_m)$, we have
\begin{equation}\label{eq:detMN}
    \begin{aligned}
        \!\!\!\!\!\!\mathrm{Det}(I-V_{\bm x}(A\oplus B)V_{\bm y}(A^T\oplus B^T))&=\mathrm{Det}\!\left(I-\begin{pmatrix}0_m&X\\X&0_m\end{pmatrix}\!\begin{pmatrix}A&0_m\\0_m&B\end{pmatrix}\!\begin{pmatrix}0_m&Y\\Y&0_m\end{pmatrix}\!\begin{pmatrix}A^T&0_m\\0_m&B^T\end{pmatrix}\right)\!\!\!\!\!\!\!\!\!\!\!\!\!\!\!\!\!\!\!\!\\
        &=\mathrm{Det}\!\begin{pmatrix}I-XBYA^T&0_m\\0_m&I-XAYB^T\end{pmatrix}\\
        &=\mathrm{Det}(I-XBYA^T)\,\mathrm{Det}(I-XAYB^T)\\
        &=\mathrm{Det}(I-XAYB^T)^2,
    \end{aligned}
\end{equation}
where the last line is obtained by using Sylvester's determinant theorem $\mathrm{Det}(I+MN)=\mathrm{Det}(I+NM)$ together with the transpose for the first determinant. Combining Eqs.~(\ref{eq:mmmttmssgenMN}) and~(\ref{eq:detMN}) concludes the proof of Theorem~\ref{th:mmmt+}. 

This theorem reduces to the MacMahon master theorem when $A=I$ or $B=I$, since $\mathrm{Per}(I_{\bm p,\bm q})=\bm p!\delta_{\bm p,\bm q}$. 
Another particular case of interest is when $B=J_m$, where $J_m$ is the all-$1$ matrix of size $m\times m$, which satisfies $\mathrm{Per}(J_m)=m!$. In this case, we obtain, for all $\bm x,\bm y\in\mathbb C^m$,
\begin{equation}\label{eq:new4}
    \begin{aligned}
        \sum_{\substack{n\in\mathbb N,\bm p,\bm q\in\mathbb N^m\\|\bm p|=|\bm q|=n}}\frac{n!\bm x^{\bm p}\bm y^{\bm q}}{\bm p!\bm q!}\mathrm{Per}(A_{\bm p,\bm q})&=\frac1{\mathrm{Det}(I-AYJ_mX)}\\
        &=\frac1{\mathrm{Det}(I-A\bm y\bm x^T)}\\
        &=\frac1{1-\bm x^TA\bm y},
    \end{aligned}
\end{equation}
where we used the fact that $A\bm y\bm x^T$ is a rank-one matrix to compute the determinant. This implies Corollary~\ref{coro:mmmt+}: for all $n\in\mathbb N^*$ and all $\bm p,\bm q\in\mathbb N^m$ with $|\bm p|=|\bm q|=n$,
\begin{equation}
    \begin{aligned}
    \mathrm{Per}(A_{\bm p,\bm q})&=\frac{\bm p!\bm q!}{n!}[\bm x^{\bm p}\bm y^{\bm q}]\left(\frac1{1-\bm x^TA\bm y}\right)\\
    &=\frac{\bm p!\bm q!}{n!}[\bm x^{\bm p}\bm y^{\bm q}]\left(\bm x^TA\bm y\right)^n,
    \end{aligned}
\end{equation}
where we used the Taylor series $\frac1{1-z}=\sum_{k=0}^{+\infty}z^k$.

We note that Theorem~\ref{th:mmmt+}---which is a generalization of the MacMahon master theorem to two matrices---may also be obtained by combining the MacMahon master theorem in Eq.~(\ref{eq:mmmt}) with the Cauchy--Binet theorem in Eq.~(\ref{eq:percomposition}), applied to the matrix $AYB^T$. As it turns out, we may apply the same proof technique inductively in order to generalize the MacMahon master theorem to $N$ matrices $A^{(1)},\dots,A^{(N)}$, for $N\ge1$ and obtain Theorem~\ref{th:mmmt++}: assuming that Eq.~(\ref{eq:new6}) holds for some $N\ge2$, we have, for all $m\times m$ matrices $B^{(1)},\dots,B^{(N)}$,
\begin{equation}\label{eq:induction0}
    \begin{aligned}
        &\sum_{\bm p_1,\dots,\bm p_N\in\mathbb N^m}\prod_{k=1}^N\frac{\bm z_k^{\bm p_k}}{\bm p_k!}\mathrm{Per}(B^{(1)}_{\bm p_1,\bm p_2})\mathrm{Per}(B^{(2)}_{\bm p_2,\bm p_3})\dots\mathrm{Per}(B^{(N)}_{\bm p_N,\bm p_1})\\
        &\quad\quad\quad\quad\quad\quad=\frac1{\mathrm{Det}(I-Z_1B^{(1)}\dots Z_NB^{(N)})},
    \end{aligned}
\end{equation}
where $\bm z_k=(z_{k1},\dots,z_{km})$ are formal variables and $Z_k=\mathrm{Diag}(\bm z_k)$ for all $k\in\{1,\dots,N\}$.

Let $A^{(1)},\dots,A^{(N+1)}$ be $m\times m$ matrices, $\bm z_{N+1}=(z_{N+1,1},\dots,z_{N+1,m})$ be formal variables and $Z_{N+1}=\mathrm{Diag}(\bm z_{N+1})$.
Setting $B^{(N)}=A^{(N)}Z_{N+1}A^{(N+1)}$ and $B^{(k)}=A^{(k)}$ for all $k\in\{1,\dots,N\}$, we obtain with Eq.~(\ref{eq:induction0}):
\begin{equation}\label{eq:induction1}
    \begin{aligned}
        &\frac1{\mathrm{Det}(I-Z_1A^{(1)}\dots Z_NA^{(N)}Z_{N+1}A^{(N+1)})}\\
        &\quad=\sum_{\bm p_1,\dots,\bm p_N\in\mathbb N^m}\prod_{k=1}^N\frac{\bm z_k^{\bm p_k}}{\bm p_k!}\mathrm{Per}(A^{(1)}_{\bm p_1,\bm p_2})\dots\mathrm{Per}(A^{(N-1)}_{\bm p_{N-1},\bm p_N})\mathrm{Per}((A^{(N)}Z_{N+1}A^{(N+1)})_{\bm p_N,\bm p_1}).
    \end{aligned}
\end{equation}
On the other hand, with the Cauchy--Binet theorem from Eq.~(\ref{eq:percomposition}),
\begin{equation}\label{eq:induction2}
    \begin{aligned}
        \mathrm{Per}((A^{(N)}Z_{N+1}A^{(N+1)})_{\bm p_N,\bm p_1})&=\!\!\!\!\!\sum_{\bm p_{N+1}\in\mathbb N^m}\!\frac1{\bm p_{N+1}!}\mathrm{Per}(A^{(N)}_{\bm p_N,\bm p_{N+1}})\mathrm{Per}((Z_{N+1}A^{(N+1)})_{\bm p_{N+1},\bm p_1})\\
        &=\!\!\!\!\!\sum_{\bm p_{N+1}\in\mathbb N^m}\!\frac{\bm z_{N+1}^{\bm p_{N+1}}}{\bm p_{N+1}!}\mathrm{Per}(A^{(N)}_{\bm p_N,\bm p_{N+1}})\mathrm{Per}(A^{(N+1)}_{\bm p_{N+1},\bm p_1}),
    \end{aligned}
\end{equation}
where we used the fact that multiplying any single row of $M$ by a variable $z$ changes $\mathrm{Per}(M)$ to $z\mathrm{Per}(M)$. Combining Eqs.~(\ref{eq:induction1}) and~(\ref{eq:induction2}) completes the induction step and the proof of Theorem~\ref{th:mmmt++}.
\end{proof}

\textbf{Applications.} Recall that the MacMahon master theorem is a useful tool for deriving combinatorial identities (see section~\ref{sec:mmmt}): it expresses the permanent of an $m\times m$ matrix $A$ with rows and columns repeated in the same way as the coefficient
\begin{equation}\label{eq:perascoefmmt}
    \mathrm{Per}(A_{\bm p,\bm p})=\bm p![\bm z^{\bm p}]\left(\frac1{\mathrm{Det}(I-ZA)}\right),
\end{equation}
where $Z=\mathrm{Diag}(\bm z)$ and $\bm p\in\mathbb N^m$, while the same permanent may also be expressed as the coefficient
\begin{equation}\label{eq:perascoefmonom}
    \mathrm{Per}(A_{\bm p,\bm p})=\bm p![\bm z^{\bm p}](A\bm z)^{\bm p}.
\end{equation}
Our generalizations of the MacMahon master theorem may be used in a similar fashion to obtain simple proofs of combinatorial identities. Let us illustrate this with an example: for $n\in\mathbb N$ and $a,b\in\mathbb C$, setting
\begin{equation}
    S_n(a,b):=\sum_{k=0}^n\binom nk^2a^kb^{n-k},
\end{equation}
we aim to prove the relation
\begin{equation}
    S_n(a,b)^2=\sum_{l=0}^n\binom{2l}l\binom{n+l}{2l}(-1)^{n-l}(a-b)^{2n-2l}S_l(a^2,b^2),
\end{equation}
which for $a=b=1$ directly implies the well-known $\sum_{k=0}^n\binom nk^2=\binom{2n}n$.

\begin{proof}
By Theorem~\ref{th:mmmt+} we have
\begin{equation}\label{eq:perascoefmmt+}
    \mathrm{Per}(A_{\bm p,\bm q})\mathrm{Per}(B_{\bm p,\bm q})=\bm p!\bm q![\bm x^{\bm p}\bm y^{\bm q}]\left(\frac1{\mathrm{Det}(I-XAYB^T)}\right),
\end{equation}
where $X=\mathrm{Diag}(x_1,\dots,x_m)$ and $Y=\mathrm{Diag}(y_1,\dots,y_m)$.
Setting
\begin{equation}
    A=B=\begin{pmatrix}1&a\\1&b\end{pmatrix},
\end{equation}
for $a,b\in\mathbb C$, $n\in\mathbb N^*$, and $\bm p=\bm q=(n,n)$, we obtain with Eq.~(\ref{eq:perascoefmonom}):
\begin{equation}\label{eq:exmmt+inter}
    \begin{aligned}
        \mathrm{Per}(A_{\bm p,\bm q})&=(n!)^2[z_1^nz_2^n](z_1+az_2)^n(z_1+bz_2)^n\\
        &=(n!)^2\sum_{k+l=n}\binom nk\binom nla^kb^l\\
        &=(n!)^2S_n(a,b).
    \end{aligned}
\end{equation}
On the other hand, with Eq.~(\ref{eq:perascoefmmt+}) we have
\begin{align}
        \nonumber\mathrm{Per}(A_{\bm p,\bm q})^2&=(n!)^4[x_1^nx_2^ny_1^ny_2^n]\left(\frac1{\mathrm{Det}(I-\mathrm{Diag}(x_1,x_2)A\mathrm{Diag}(y_1,y_2)A^T}\right)\\
        \nonumber&=(n!)^4[x_1^nx_2^ny_1^ny_2^n]\left(\frac1{1-x_1y_1-a^2x_1y_2-x_2y_1-b^2x_2y_2+(a-b)^2x_1x_2y_1y_2}\right)\\
        &=(n!)^4\sum_{k=n}^{2n}[x_1^nx_2^ny_1^ny_2^n]\left(x_1y_1+a^2x_1y_2+x_2y_1+b^2x_2y_2-(a-b)^2x_1x_2y_1y_2\right)^k\displaybreak\\
        \nonumber&=(n!)^4\sum_{k=n}^{2n}\sum_{j=0}^k\binom kj(-1)^{k-j}(a-b)^{2k-2j}\!\!\!\!\!\!\!\!\sum_{n_{11}+n_{12}+n_{21}+n_{22}=j}\frac{j!a^{2n_{12}}b^{2n_{22}}}{n_{11}!n_{12}!n_{21}!n_{22}!}\!\!\!\!\!\!\\
        \nonumber&\quad\quad\times[x_1^nx_2^ny_1^ny_2^n]\left((x_1x_2y_1y_2)^{k-j}x_1^{n_{11}+n_{12}}x_2^{n_{21}+n_{22}}y_1^{n_{11}+n_{21}}y_2^{n_{12}+n_{22}}\right),\!\!\!\!\!\!
\end{align}
where we used the Taylor expansion of $\frac1{1-z}$ in the third line and the multinomial theorem in the last line. The indices in the above expression must satisfy $n_{11}=n_{22}$, $n_{12}=n_{21}$, which implies $j=2n_{11}+2n_{12}$, i.e.\ $j$ is even. Moreover, $n=k-j+n_{11}+n_{12}=k-j/2$. Relabeling $j=2l$ and $n_{11}=p$ we have $k=l+n$ and we obtain
\begin{equation}
    \begin{aligned}
        \mathrm{Per}(A_{\bm p,\bm q})^2&=(n!)^4\sum_{l=0}^n\binom{n+l}{2l}(-1)^{n-l}(a-b)^{2n-2l}\sum_{p=0}^l\frac{(2l)!a^{2l-2p}b^{2p}}{p!^2(l-p)!^2}\\
        &=(n!)^4\sum_{l=0}^n\binom{2l}l\binom{n+l}{2l}(-1)^{n-l}(a-b)^{2n-2l}S_l(a^2,b^2).
    \end{aligned}
\end{equation}
With Eq.~(\ref{eq:exmmt+inter}) this proves
\begin{equation}
    S_n(a,b)^2=\sum_{l=0}^n\binom{2l}l\binom{n+l}{2l}(-1)^{n-l}(a-b)^{2n-2l}S_l(a^2,b^2).
\end{equation}
\end{proof}

\subsection{New generating functions}

We have encountered several generating functions for the permanent of an $m\times m$ matrix $A$ with differently repeated rows and columns. Jackson's formula gives~\cite{jackson1977unification}:
\begin{equation}
    e^{\bm x^TA\bm y}=\!\!\!\!\!\sum_{\substack{n\in\mathbb N,\bm p,\bm q\in\mathbb N^m\\|\bm p|=|\bm q|=n}}\frac{\bm x^{\bm p}\bm y^{\bm q}}{\bm p!\bm q!}\mathrm{Per}(A_{\bm p,\bm q}),
\end{equation}
for formal variables $\bm x=(x_1,\dots,x_m),\bm y=(y_1,\dots,y_m)$. Moreover, we have obtained in Eq.~(\ref{eq:new4}), as a corollary of our generalization of the MacMahon master theorem:
\begin{equation}
    \frac1{1-\bm x^TA\bm y}=\!\!\!\!\!\sum_{\substack{n\in\mathbb N,\bm p,\bm q\in\mathbb N^m\\|\bm p|=|\bm q|=n}}n!\frac{\bm x^{\bm p}\bm y^{\bm q}}{\bm p!\bm q!}\mathrm{Per}(A_{\bm p,\bm q}).
\end{equation}
Furthermore, Corollary~\ref{coro:mmmt+} gives:
\begin{equation}\label{eq:quintessence}
    (\bm x^TA\bm y)^n=\!\!\!\!\!\sum_{\substack{\bm p,\bm q\in\mathbb N^m\\|\bm p|=|\bm q|=n}}n!\frac{\bm x^{\bm p}\bm y^{\bm q}}{\bm p!\bm q!}\mathrm{Per}(A_{\bm p,\bm q}).
\end{equation}
In this section we prove Theorem~\ref{th:gen}, which generalizes all three statements: for any series $f(z)=\sum_{n=0}^{+\infty}f_nz^n$,
\begin{equation}
    f(\bm x^TA\bm y)=\!\!\!\!\!\sum_{\substack{n\in\mathbb N,\bm p,\bm q\in\mathbb N^m\\|\bm p|=|\bm q|=n}}f_nn!\frac{\bm x^{\bm p}\bm y^{\bm q}}{\bm p!\bm q!}\mathrm{Per}(A_{\bm p,\bm q}).
\end{equation}
One may prove this statement by linearity, using Eq.~(\ref{eq:quintessence}). In what follows, we give a direct quantum-inspired proof which is similar to that of Jackson's formula from section~\ref{sec:pergen}. 

\begin{proof}[Proof of \cref{th:gen}]
For $\bm\alpha,\bm\beta\in\mathbb C^m$ and a unitary matrix $U$,
\begin{equation}\label{eq:proofgen1}
    \begin{aligned}
        \!\!\!\!\!\!\!\!\!\!\sum_{\substack{n\in\mathbb N,\bm p,\bm q\in\mathbb N^m\\|\bm p|=|\bm q|=n}}\!\!\!\!f_nn!\frac{\bm\alpha^{\bm p}\bm\beta^{\bm q}}{\bm p!\bm q!}\mathrm{Per}(U_{\bm p,\bm q})&=e^{\frac12\|\bm\alpha\|^2+\frac12\|\bm\beta\|^2}\sum_{n=0}^{+\infty}f_nn!\!\!\!\!\!\!\sum_{\substack{\bm p,\bm q\in\mathbb N^m\\|\bm p|=|\bm q|=n}}\langle\bm\alpha^*\ket{\bm p}\!\bra{\bm p}\hat U\ket{\bm q}\!\bra{\bm q}\bm\beta\rangle\!\!\!\!\!\!\!\!\!\!\!\!\!\!\!\!\!\!\!\\
        &=e^{\frac12\|\bm\alpha\|^2+\frac12\|\bm\beta\|^2}\sum_{n=0}^{+\infty}f_nn!\bra{\bm\alpha^*}\Pi_n\hat U\Pi_n\ket{\bm\beta}\\
        &=e^{\frac12\|\bm\alpha\|^2+\frac12\|\bm\beta\|^2}\sum_{n=0}^{+\infty}f_nn!\bra{\bm\alpha^*}\Pi_n\hat U\ket{\bm\beta}\\
        &=e^{\frac12\|\bm\alpha\|^2+\frac12\|\bm\beta\|^2}\sum_{n=0}^{+\infty}f_nn!\bra{\bm\alpha^*}\Pi_n\ket{U\bm\beta},
    \end{aligned}
\end{equation}
where we used the Fock basis expansions of coherent states (\ref{eq:quantum1}) and of $\hat U$~(\ref{eq:quantum3}) in the first line, the definition of the projector $\Pi_n=\sum_{|\bm p|=n}\ket{\bm p}\!\bra{\bm p}$ in the second line, the fact that passive linear operations conserve the total number of photons (\ref{eq:quantum4}) in the third line, and the action of $\hat U$ on coherent states (\ref{eq:quantum5}) in the fourth line.
Now,
\begin{equation}\label{eq:proofgen2}
    \begin{aligned}
        e^{\frac12\|\bm\alpha\|^2+\frac12\|\bm\beta\|^2}\bra{\bm\alpha^*}\Pi_n\ket{U\bm\beta}&=e^{\frac12\|\bm\alpha\|^2+\frac12\|\bm\beta\|^2}\sum_{|\bm p|=n}\langle\bm\alpha^*\ket{\bm p}\!\bra{\bm p}U\bm\beta\rangle\\
        &=\sum_{|\bm p|=n}e^{\frac12\|\bm\beta\|^2-\frac12\|U\bm\beta\|^2}\frac{\bm\alpha^{\bm p}(U\bm\beta)^{\bm p}}{\bm p!}\\
        &=\sum_{|\bm p|=n}\frac{\bm\alpha^{\bm p}(U\bm\beta)^{\bm p}}{\bm p!}\\
        &=\frac1{n!}(\bm\alpha^TU\bm\beta)^n,
    \end{aligned}
\end{equation}
where we used the Fock basis expansion of coherent states (\ref{eq:quantum1}) in the second line, the fact that $U$ is unitary in the third line, and the multinomial theorem in the last line. 
Combining Eqs.~(\ref{eq:proofgen1}) and~(\ref{eq:proofgen2}) completes the proof for unitary matrices. Once again, the relation for a generic nonzero matrix $A$ is obtained by embedding $\frac1{\|A\|}A$ as a submatrix of a unitary matrix $U$.

As a result, when $f_n\neq0$ we have, for any $m\times m$ matrix $A$, all $n\in\mathbb N$ and all $\bm p,\bm q\in\mathbb N^m$ such that $|\bm p|=|\bm q|=n$,
\begin{equation}
    \begin{aligned}
        \underset{\bm x,\bm y\in\mathbb T^m}{\mathbb E}\left[\frac{\bm p!\bm q!}{\bm x^{\bm p}\bm y^{\bm q}}\frac{f(\bm x^TA\bm y)}{\partial^n_zf(z)\big\vert_{z=0}}\right]&=\frac{\bm p!\bm q!}{\partial^n_zf(z)\big\vert_{z=0}}\sum_{\substack{k\in\mathbb N,\bm s,\bm t\in\mathbb N^m\\|\bm s|=|\bm t|=k}}f_kk!\frac{\mathrm{Per}(A_{\bm s,\bm t})}{\bm s!\bm t!}\\
        &\quad\times\prod_{j=1}^m\left(\int_{\theta_j,\varphi_j\in[0,2\pi]}e^{i\theta_j(s_j-p_j)}e^{i\varphi_j(t_j-q_j)}\frac{d\theta_jd\varphi_j}{4\pi^2}\right)\\
        &=\frac1{\partial^n_zf(z)\big\vert_{z=0}}f_nn!\mathrm{Per}(A_{\bm p,\bm q})\\
        &=\mathrm{Per}(A_{\bm p,\bm q}),
    \end{aligned}
\end{equation}
where we have set $\bm x=(e^{i\theta_1},\dots,e^{i\theta_m})\in\mathbb T^m$ and $\bm y=(e^{i\varphi_1},\dots,e^{i\varphi_m})\in\mathbb T^m$ in the first line, and where we have used $\int_0^{2\pi}e^{i(k-l)\theta}\frac{d\theta}{2\pi}=\delta_{kl}$ in the third line.
\end{proof}

\textbf{Applications.} These generating functions may be used to derive remarkable identities for the permanent: for instance, for $A$ and $B$ two $m\times m$ matrices, taking $f(z)=e^z$ and equating the $\bm x^{\bm p}\bm y^{\bm q}$ coefficients of $e^{\bm x^T(A+B)\bm y}$ and $e^{\bm x^TA\bm y}e^{\bm x^TB\bm y}$ yields the sum formula~\cite{minc1984permanents,percus2012combinatorial}:
\begin{equation}\label{eq:persum}
    \mathrm{Per}((A+B)_{\bm p,\bm q})=\sum_{\substack{\bm s+\bm t=\bm p\\\bm u+\bm v=\bm q}}\frac{\bm p!\bm q!}{\bm s!\bm t!\bm u!\bm v!}\mathrm{Per}(A_{\bm s,\bm u})\mathrm{Per}(B_{\bm t,\bm v}),
\end{equation}
for all $\bm p,\bm q\in\mathbb N^m$.

Similarly, for $A$ an $m\times m$ matrix, taking $f(z)=z^p$ for $p=k,l,k+l$ and equating the $\bm x^{\bm p}\bm y^{\bm q}$ coefficients of $(\bm x^TA\bm y)^{k+l}$ and $(\bm x^TA\bm y)^k(\bm x^TA\bm y)^l$ yields the Laplace expansion formula~\cite{minc1984permanents,percus2012combinatorial}:
\begin{equation}
    \mathrm{Per}(A_{\bm p,\bm q})=\frac{k!l!}{(k+l)!}\sum_{\substack{\bm s+\bm t=\bm p\\\bm u+\bm v=\bm q\\|\bm s|=|\bm u|=k\\|\bm t|=|\bm v|=l}}\frac{\bm p!\bm q!}{\bm s!\bm t!\bm u!\bm v!}\mathrm{Per}(A_{\bm s,\bm u})\mathrm{Per}(A_{\bm t,\bm v}),
\end{equation}
for all $k,l\in\mathbb N$ and all $\bm p,\bm q\in\mathbb N^m$ with $|\bm p|=|\bm q|=k+l$.

Finally, for $A$ and $B$ two $m\times m$ matrices, taking $f(z)=-\log(1-z)$ and equating the $\bm x^{\bm p}\bm y^{\bm q}$ coefficients of $-\log(1-\bm x^TA\bm y)-\log(1-\bm x^TB\bm y)$ and $-\log[(1-\bm x^TB\bm y)(1-\bm x^TA\bm y)]$ yields Theorem~\ref{th:sumper}: 
\begin{equation}
    \begin{aligned}
        &\mathrm{Per}(A_{\bm p,\bm q})+\mathrm{Per}(B_{\bm p,\bm q})\\
        &\quad=\sum_{k=0}^{\lfloor\frac n2\rfloor}\frac{(-1)^k}{\binom{n-1}k}\!\!\!\!\!\!\!\sum_{\substack{\bm a+\bm b+\bm c=\bm p\\\bm a'+\bm b'+\bm c'=\bm q\\|\bm a|=|\bm b|=|\bm a'|=|\bm b'|=k}}\!\!\!\!\!\!\!\frac{\bm p!\bm q!}{\bm a!\bm b!\bm c!\bm a'!\bm b'!\bm c'!}\mathrm{Per}(A_{\bm a,\bm a'})\mathrm{Per}(B_{\bm b,\bm b'})\mathrm{Per}((A+B)_{\bm c,\bm c'}),
    \end{aligned}
\end{equation}
after a derivation which we detail below.

\begin{proof}[Proof of \cref{th:sumper}]
Let us write
\begin{equation}
    -\log[(1-\bm x^TB\bm y)(1-\bm x^TA\bm y)]=-\log[1-\bm x^T(A+B-B\bm y\bm x^TA)\bm y)].
\end{equation}
Applying Theorem~\ref{th:gen} for $f(z)=-\log(1-z)$ with $z=\bm x^TB\bm y,\bm x^TA\bm y,\bm x^T(A+B-B\bm y\bm x^TA)\bm y$ and considering the $\bm x^{\bm p}\bm y^{\bm q}$ coefficient we obtain for all $n\in\mathbb N^*$ and all $\bm p,\bm q\in\mathbb N$ with $|\bm p|=|\bm q|=n$:
\begin{equation}\label{eq:sumperinter0}
    \begin{aligned}
        &\mathrm{Per}(A_{\bm p,\bm q})+\mathrm{Per}(B_{\bm p,\bm q})\\
        &\quad\quad=\frac{\bm p!\bm q!}{(n-1)!}[\bm x^{\bm p}\bm y^{\bm q}]\left(\sum_{\substack{l\in\mathbb N^*,\bm s,\bm t\in\mathbb N^m\\|\bm s|=|\bm t|=l}}(l-1)!\frac{\bm x^{\bm s}\bm y^{\bm t}}{\bm s!\bm t!}\mathrm{Per}((A+B-B\bm y\bm x^TA)_{\bm s,\bm t})\right)\\
        &\quad\quad=\sum_{\substack{l\in\mathbb N^*,\bm s,\bm t\in\mathbb N^m\\|\bm s|=|\bm t|=l}}\frac{(l-1)!}{(n-1)!}\frac{\bm p!\bm q!}{\bm s!\bm t!}[\bm x^{\bm p}\bm y^{\bm q}]\left(\bm x^{\bm s}\bm y^{\bm t}\mathrm{Per}((A+B-B\bm y\bm x^TA)_{\bm s,\bm t})\right)\\
        &\quad\quad=\sum_{\substack{l\in\mathbb N^*,\bm s,\bm t\in\mathbb N^m\\|\bm s|=|\bm t|=l\\\bm s\le\bm p,\bm t\le\bm q}}\frac{(l-1)!}{(n-1)!}\frac{\bm p!\bm q!}{\bm s!\bm t!}[\bm x^{\bm p-\bm s}\bm y^{\bm q-\bm t}]\left(\mathrm{Per}((A+B-B\bm y\bm x^TA)_{\bm s,\bm t})\right)\\
        &\quad\quad=\sum_{\substack{l\in\mathbb N^*,\bm s,\bm t,\bm u,\bm v\in\mathbb N^m\\|\bm s|=|\bm t|=l\\\bm s+\bm u=\bm p\\\bm t+\bm v=\bm q}}\frac{(l-1)!}{(n-1)!}\frac{\bm p!\bm q!}{\bm s!\bm t!}[\bm x^{\bm u}\bm y^{\bm v}]\left(\mathrm{Per}((A+B-B\bm y\bm x^TA)_{\bm s,\bm t})\right).
    \end{aligned}
\end{equation}
Applying the sum formula from Eq.~(\ref{eq:persum}) to the matrices $(A+B)$ and $(-B\bm y\bm x^TA)$ we obtain, for all $l\in\mathbb N^*$ and all $\bm s,\bm t\in\mathbb N$ with $|\bm s|=|\bm t|=l$,
\begin{equation}\label{eq:sumperinter1}
    \begin{aligned}
        \mathrm{Per}((A+B-B\bm y\bm x^TA)_{\bm s,\bm t})&=\!\!\!\!\!\sum_{\substack{\bm c+\bm k=\bm s\\\bm c'+\bm k'=\bm t\\|\bm c|=|\bm c'|,|\bm k|=|\bm k'|}}\!\!\!\!\!\frac{\bm s!\bm t!}{\bm c!\bm k!\bm c'!\bm k'!}\mathrm{Per}((A+B)_{\bm c,\bm c'})\mathrm{Per}((-B\bm y\bm x^TA)_{\bm k,\bm k'})\!\!\!\!\!\!\!\!\!\!\!\!\!\!\!\\
        &=\!\!\!\!\!\sum_{\substack{\bm c+\bm k=\bm s\\\bm c'+\bm k'=\bm t\\|\bm c|=|\bm c'|,|\bm k|=|\bm k'|}}\!\!\!\!\!\frac{(-1)^{|\bm k|}\bm s!\bm t!}{\bm c!\bm k!\bm c'!\bm k'!}\mathrm{Per}((A+B)_{\bm c,\bm c'})\mathrm{Per}((B\bm y\bm x^TA)_{\bm k,\bm k'}).\!\!\!\!\!\!\!\!\!\!\!\!\!\!\!
    \end{aligned}
\end{equation}
The permanent of the outer product $\bm v\bm w^T$ of two vectors $\bm v$ and $\bm w$ of the same size $k$ is easily computed as 
\begin{equation}
    \begin{aligned}
        \mathrm{Per}(\bm v\bm w^T)&=\sum_{\sigma\in\mathcal S_k}\prod_{j=1}^kv_jw_{\sigma(j)}\\
        &=\sum_{\sigma\in\mathcal S_k}v_1\dots v_kw_1\dots w_k\\
        &=k!\,v_1\dots v_kw_1\dots w_k.
    \end{aligned}
\end{equation}
The matrix $(B\bm y\bm x^TA)_{\bm k,\bm k'}$ is the outer product between the vector $(B\bm y)_{\bm k}$ (obtained from the vector $B\bm y$ by repeating $k_i$ times its $i^{th}$ entry) and the vector $(A^T\bm x)_{\bm k'}$ (obtained from the vector $A^T\bm x$ by repeating $k_i'$ times its $i^{th}$ entry) of size $|\bm k|$, so its permanent is given by
\begin{equation}
        \mathrm{Per}((B\bm y\bm x^TA)_{\bm k,\bm k'})=(|\bm k|)!(B\bm y)^{\bm k}(A^T\bm x)^{\bm k'}.
\end{equation}
With Eq.~(\ref{eq:sumperinter1}) we obtain
\begin{equation}\label{eq:sumperinter2}
        \mathrm{Per}((A+B-B\bm y\bm x^TA)_{\bm s,\bm t}=\!\!\!\!\!\!\!\!\!\!\sum_{\substack{\bm c+\bm k=\bm s\\\bm c'+\bm k'=\bm t\\|\bm c|=|\bm c'|,|\bm k|=|\bm k'|}}\!\!\!\!\!\!\!\!\!\!\!\frac{(-1)^{|\bm k|}(|\bm k|)!\bm s!\bm t!}{\bm c!\bm k!\bm c'!\bm k'!}\mathrm{Per}((A+B)_{\bm c,\bm c'})(B\bm y)^{\bm k}(A^T\bm x)^{\bm k'}.
\end{equation}
Plugging this expression into Eq.~(\ref{eq:sumperinter0}) yields
\begin{align}\label{eq:sumperinter3}
        \nonumber&\quad\mathrm{Per}(A_{\bm p,\bm q})+\mathrm{Per}(B_{\bm p,\bm q})\\
        \nonumber&=\sum_{\substack{l\in\mathbb N^*,\bm s,\bm t,\bm u,\bm v\in\mathbb N^m\\|\bm s|=|\bm t|=l\\\bm s+\bm u=\bm p\\\bm t+\bm v=\bm q}}\!\frac{(l-1)!}{(n-1)!}\frac{\bm p!\bm q!}{\bm s!\bm t!}\!\!\!\!\!\!\!\!\sum_{\substack{\bm c,\bm k,\bm c',\bm k'\in\mathbb N^m\\\bm c+\bm k=\bm s\\\bm c'+\bm k'=\bm t\\|\bm c|=|\bm c'|,|\bm k|=|\bm k'|}}\!\!\!\!\!\!\!\!\!\!\!\frac{(-1)^{|\bm k|}(|\bm k|)!\bm s!\bm t!}{\bm c!\bm k!\bm c'!\bm k'!}\mathrm{Per}((A+B)_{\bm c,\bm c'})[\bm x^{\bm u}\bm y^{\bm v}](B\bm y)^{\bm k}(A^T\bm x)^{\bm k'}\\
        \nonumber&=\sum_{\substack{l\in\mathbb N^*,\bm s,\bm t,\bm u,\bm v\in\mathbb N^m\\|\bm s|=|\bm t|=l\\\bm s+\bm u=\bm p\\\bm t+\bm v=\bm q}}\!\frac{(l-1)!}{(n-1)!}\bm p!\bm q!\!\!\!\!\!\!\!\!\!\!\sum_{\substack{\bm c,\bm k,\bm c',\bm k'\in\mathbb N^m\\\bm c+\bm k=\bm s\\\bm c'+\bm k'=\bm t\\|\bm c|=|\bm c'|,|\bm k|=|\bm k'|}}\!\!\!\!\!\!\!\frac{(-1)^{|\bm k|}(|\bm k|)!}{\bm c!\bm k!\bm c'!\bm k'!}\mathrm{Per}((A+B)_{\bm c,\bm c'})[\bm x^{\bm u}](A^T\bm x)^{\bm k'}[\bm y^{\bm v}](B\bm y)^{\bm k}\\
        &=\sum_{\substack{l\in\mathbb N^*,\bm s,\bm t,\bm u,\bm v\in\mathbb N^m\\|\bm s|=|\bm t|=l\\\bm s+\bm u=\bm p\\\bm t+\bm v=\bm q}}\!\frac{(l-1)!}{(n-1)!}\bm p!\bm q!\!\!\!\!\!\!\!\!\!\!\sum_{\substack{\bm c,\bm k,\bm c',\bm k'\in\mathbb N^m\\\bm c+\bm k=\bm s\\\bm c'+\bm k'=\bm t\\|\bm c|=|\bm c'|\\|\bm k|=|\bm k'|=|\bm u|=|\bm v|}}\!\!\!\!\!\!\!\frac{(-1)^{|\bm k|}(|\bm k|)!}{\bm c!\bm k!\bm c'!\bm k'!}\mathrm{Per}((A+B)_{\bm c,\bm c'})\frac{\mathrm{Per}(A^T_{\bm k',\bm u})}{\bm u!}\frac{\mathrm{Per}(B_{\bm k,\bm v})}{\bm v!}\displaybreak\\
        \nonumber&=\sum_{\substack{l\in\mathbb N^*,\bm s,\bm t,\bm u,\bm v\in\mathbb N^m\\|\bm s|=|\bm t|=l\\\bm s+\bm u=\bm p\\\bm t+\bm v=\bm q}}\!\frac{(l-1)!}{(n-1)!}\frac{\bm p!\bm q!}{\bm u!\bm v!}\!\!\!\!\!\!\!\!\sum_{\substack{\bm c,\bm k,\bm c',\bm k'\in\mathbb N^m\\\bm c+\bm k=\bm s\\\bm c'+\bm k'=\bm t\\|\bm c|=|\bm c'|\\|\bm k|=|\bm k'|=|\bm u|=|\bm v|}}\!\!\!\!\!\!\!\frac{(-1)^{|\bm k|}(|\bm k|)!}{\bm c!\bm k!\bm c'!\bm k'!}\mathrm{Per}(A_{\bm u,\bm k'})\mathrm{Per}(B_{\bm k,\bm v})\mathrm{Per}((A+B)_{\bm c,\bm c'})\\
        \nonumber&=\sum_{k=0}^{\lfloor\frac n2\rfloor}\frac{(-1)^kk!(n-k-1)!}{(n-1)!}\!\!\!\!\!\!\!\!\!\!\!\sum_{\substack{\bm a,\bm b,\bm c,\bm a',\bm b',\bm c'\in\mathbb N^m\\\bm a+\bm b+\bm c=\bm p\\\bm a'+\bm b'+\bm c'=\bm q\\|\bm a|=|\bm a'|=|\bm b|=|\bm b'|=k}}\!\!\frac{\bm p!\bm q!}{\bm a!\bm b!\bm c!\bm a'!\bm b'!\bm c'!}\mathrm{Per}(A_{\bm a,\bm a'})\mathrm{Per}(B_{\bm b,\bm b'})\mathrm{Per}((A+B)_{\bm c,\bm c'}),
\end{align}
where we used Eq.~(\ref{eq:permonom}) in the third line, $\mathrm{Per}(M^T)=\mathrm{Per}(M)$ in the fourth line, and where we relabeled $\bm u$ as $\bm a$, $\bm k'$ as $\bm a'$, $\bm k$ as $\bm b$, $\bm v$ as $\bm b'$, and $|\bm k|=|\bm k'|=|\bm u|=|\bm v|$ as $k=n-l$ in the last line. The cutoff of the sum at $\lfloor\frac n2\rfloor$ comes from the fact that $n=|\bm p|=|\bm a|+|\bm b|+|\bm c|=2k+|\bm c|\ge2k$.
Writing $\frac{(-1)^kk!(n-k-1)!}{(n-1)!}=\frac{(-1)^k}{\binom{n-1}k}$ completes the proof of Theorem~\ref{th:sumper}.
\end{proof}

\section{Boson Sampling with input cat states}
\label{sec:BScat}

In this section, we discuss the classical complexity of sampling from the output distribution of linear optical computations with input cat states, which we refer to as Boson Sampling with input cat states.

Since the introduction of Boson Sampling by Aaronson and Arkhipov~\cite{Aaronson2013} for the demonstration of quantum computational advantage using noninteracting bosons, several variants of this model have been analyzed~\cite{lund2014boson,olson2015sampling,hamilton2017gaussian,lund2017exact,BShomodyne2017,Chabaud2017hom,quesada2018gaussian,deshpande2022quantum}. These variants were introduced to address two different challenges: on the one hand, to reduce the experimental burden associated with the demonstration of quantum advantage; on the other hand, to understand the resources for this quantum advantage and which computational models are able to reproduce it. Boson Sampling with input cat states is another such variant and has been first considered in~\cite{rohde2015evidence}, where its classical hardness was shown for exact sampling and argued for approximate sampling. We strengthen these results in the following.

\medskip

We first give a proof of Lemma~\ref{lem:BScat}, which provides a closed form expression for output amplitudes of Boson Sampling with input cat states.

\begin{proof}[Proof of Lemma~\ref{lem:BScat}]
In the quantum-inspired proof of Glynn's formula in section~\ref{sec:Glynn} we have obtained the following identity:
\begin{equation}\label{eq:recallcat}
    \langle p_1\dots p_m|\hat U(\ket{\tilde{\mathrm{cat}}_\alpha}^{\otimes n}\otimes\ket 0^{\otimes m-n})=\frac{\alpha^{|\bm p|-n}e^{-\frac n2|\alpha|^2}}{2^n\sqrt{\bm p!}}\sum_{\bm x\in\{-1,1\}^n}x_1\dots x_n\prod_{i=1}^m\left(\sum_{j=1}^nu_{ij}x_j\right)^{\!p_i}\!\!\!\!\!,
\end{equation}
for all $\alpha\in\mathbb C$, all $\bm p=(p_1,\dots,p_m)\in\mathbb N^m$, all $n\le m$ and any passive linear operation $\hat U$ over $m$ modes with unitary matrix $U=(u_{ij})_{1\le i,j\le m}$, where $\ket{\tilde{\mathrm{cat}}_\alpha}=\frac1{2\alpha}(\ket\alpha-\ket{-\alpha})$ is an unnormalized cat state. By definition of the cat states (\ref{eq:cat}), these states are related to normalized cat states $\ket{\mathrm{cat}_\alpha}$ by
\begin{equation}
    \ket{\mathrm{cat}_\alpha}=\frac{\alpha e^{\frac12|\alpha|^2}}{\sqrt{\sinh(|\alpha|^2)}}\ket{\tilde{\mathrm{cat}}_\alpha}.
\end{equation}
Setting $|\bm p|=n$ and replacing unnormalized cat states by normalized ones in Eq.~(\ref{eq:recallcat}), we obtain Lemma~\ref{lem:BScat}:
\begin{equation}\label{eq:BScat}
    \begin{aligned}
        \!\!\!\!\!\!\!\!\!\!\langle p_1\dots p_m|\hat U(\ket{\mathrm{cat}_\alpha}^{\otimes n}\otimes\ket 0^{\otimes m-n})&=\frac{\alpha^n}{\sqrt{\sinh^n(|\alpha|^2)}2^n\sqrt{\bm p!}}\!\sum_{\bm x\in\{-1,1\}^n}\!\!\!\!\!x_1\dots x_n\prod_{i=1}^m\!\left(\sum_{j=1}^nu_{ij}x_j\right)^{\!p_i}\!\!\!\!\!\!\!\!\!\!\!\!\!\!\!\!\!\!\!\!\!\!\!\!\!\\
        &=\frac{\alpha^n}{\sqrt{\sinh^n(|\alpha|^2)}}\frac{\mathrm{Per}(U_{\bm p,\bm1\oplus\bm0})}{\sqrt{\bm p!}}\\
        &=\frac{\alpha^n}{\sqrt{\sinh^n(|\alpha|^2)}}\langle p_1\dots p_m|\hat U(\ket1^{\otimes n}\otimes\ket 0^{\otimes m-n}),
    \end{aligned}
\end{equation}
where we used  Eq.~(\ref{eq:glynnrepeated}) in the second line and the Fock basis expansion of $\hat U$ (\ref{eq:quantum3}) in the last line.
\end{proof}

Interestingly, Lemma~\ref{lem:BScat} implies that, up to a global factor, $n$ cat states of amplitude $\alpha$ reproduce exactly the Boson Sampling statistics of $n$ single photons (however, detection events $\bm p$ with $|\bm p|>n$ can occur for input cat states, contrary to single-photons). 

This result has two consequences, summarized by Theorem~\ref{th:hardBScat}. 
Firstly, it implies that Boson Sampling with input cat states is hard to sample exactly for $m\ge2n$ and for all choices of nonzero cat state amplitude unless the polynomial hierarchy collapses to its third level, since some of its outcome probabilities are $\#\textsf P$-hard to estimate multiplicatively~\cite{Aaronson2013}. Alternative proofs of this statement based either on universality under post-selection or on rejection sampling can be found in~\cite{rohde2015evidence}. 
Secondly, as we show hereafter, it implies that Boson Sampling with input cat states of small enough (nonzero) amplitudes is also hard to sample approximately in the same regime as Boson Sampling.

\begin{proof}[Proof of Theorem~\ref{th:hardBScat}]
Our proof uses arguments similar to the ones used in~\cite{rohde2015evidence}, extended to the case of approximate sampling. 

At a high level, we show how to convert any classical algorithm for approximate sampling from the output probability distribution of an $m$-mode Boson Sampling computation with $n$ input cat states to a classical algorithm for approximate sampling from the output probability distribution of an $m$-mode Boson Sampling computation with $n$ input Fock states (Lemma~\ref{lem:rejsamp}). We do so by performing rejection sampling: we run the first algorithm and only keep the samples with total photon number equal to $n$. In particular, we show that if the fraction of samples with correct photon number $n$ is large enough (namely, at least inverse polynomial in $m$), the rejection sampling subroutine is efficient and the final classical probability distribution is close to the ideal distribution of Boson Sampling with Fock state input. Then, we determine the regime of input cat state amplitude such that the fraction of samples with correct photon number $n$ is at least inverse polynomial in $m$ (Lemma~\ref{lem:amplitudecat}).

Let $P_\text{cat}(\bm p|n,\alpha):=|\langle p_1\dots p_m|\hat U(\ket{\mathrm{cat}_\alpha}^{\otimes n}\otimes\ket 0^{\otimes m-n})|^2$ be the probability of detecting $\bm p=(p_1,\dots,p_m)$ output photons in a Boson Sampling experiment with interferometer $\hat U$ and input cat states  $\ket{\mathrm{cat}_\alpha}^{\otimes n}\otimes\ket 0^{\otimes m-n}$, and let $P_\text{BS}(\bm p|n):=|\langle p_1\dots p_m|\hat U(\ket1^{\otimes n}\otimes\ket 0^{\otimes m-n})|^2$ be the probability of detecting $\bm p=(p_1,\dots,p_m)$ output photons in a Boson Sampling experiment with interferometer $\hat U$ and input single photons $\ket1^{\otimes n}\otimes\ket 0^{\otimes m-n}$.

We show the following reduction: as long as the fraction of samples with photon number $n$ is large enough, then any efficient classical algorithm for approximate sampling from $P_\text{cat}$ can be converted to an efficient classical algorithm for approximate sampling from $P_\text{BS}$.

\begin{lemma}\label{lem:rejsamp}
Suppose there exists an efficient classical algorithm $C$ for approximate sampling from $P_\text{cat}$, i.e., that takes as input the description of the $m$-mode boson sampler with $n$ input cat states with output probability distribution $P_\text{cat}$ and an error bound $\epsilon$, and that samples from a distribution $P_C$ such that $\|P_C-P_\text{cat}\|_\text{TV}\le\epsilon$ in $\text{poly }(m,\frac1\epsilon)$ time, where $\|\cdot\|_\text{TV}$ is the total variation distance.

Assume further that the fraction of samples from $P_\text{cat}$ with photon number equal to $n$ is at least inverse polynomial in $m$, i.e.\ $\sum_{|\bm p|=n}P_\text{cat}(\bm p|n,\alpha)\ge\frac1{\text{poly }m}$.

Then, there exists an efficient classical algorithm $\tilde C$ for approximate sampling from $P_\text{BS}$, i.e., a classical algorithm that takes as input the description of the $m$-mode boson sampler with $n$ input Fock states with output probability distribution $P_\text{BS}$ and an error bound $\epsilon$, and that samples from a distribution $P_C^\text{rej}$ such that $\|P_C^\text{rej}-P_\text{BS}\|_\text{TV}\le\epsilon$ in $\text{poly }(m,\frac1\epsilon)$ time.
\end{lemma}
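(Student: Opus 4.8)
The plan is to realise $\tilde C$ by \emph{rejection sampling}: run the given cat-state sampler $C$ and keep only those outcomes whose total photon number equals $n$. Everything hinges on one structural fact read off from Lemma~\ref{lem:BScat}.

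First I would pin down how the two distributions are related on the photon-number-$n$ sector. Squaring the identity of Lemma~\ref{lem:BScat} gives, for every $\bm p$ with $|\bm p|=n$,
\[
    P_\text{cat}(\bm p|n,\alpha)=\frac{|\alpha|^{2n}}{\sinh^n(|\alpha|^2)}\,P_\text{BS}(\bm p|n).
\]
Because passive linear optics conserves photon number (Eq.~(\ref{eq:quantum4})), the single-photon output distribution $P_\text{BS}(\cdot|n)$ is supported entirely on $\{|\bm p|=n\}$ and normalised there, so the proportionality factor equals $c:=\sum_{|\bm p|=n}P_\text{cat}(\bm p|n,\alpha)=|\alpha|^{2n}/\sinh^n(|\alpha|^2)$, which is $\ge 1/\text{poly}\,m$ by hypothesis. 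The upshot is that $P_\text{BS}(\cdot|n)$ \emph{is exactly} $P_\text{cat}$ conditioned on the event $E=\{|\bm p|=n\}$; thus exact rejection sampling from $P_\text{cat}$ would return $P_\text{BS}$ on the nose.

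I would then define $\tilde C$ precisely: call $C$ with a precision $\epsilon'$ to be fixed, and output the first returned sample lying in $E$ (aborting to an arbitrary output if none of $T=\text{poly}(m)$ trials lands in $E$). Its output law $P_C^\text{rej}$ is the conditioning of $P_C$ on $E$, up to the abort term. To bound $\|P_C^\text{rej}-P_\text{BS}\|_\text{TV}$ I would establish the elementary conditioning inequality: if $\|P-Q\|_\text{TV}\le\delta$ and $Q(E)=\gamma$ with $\delta\le\gamma/2$, then $\|P(\cdot|E)-Q(\cdot|E)\|_\text{TV}\le 3\delta/\gamma$. This follows by writing $\tfrac{P(\bm p)}{P(E)}-\tfrac{Q(\bm p)}{Q(E)}=\tfrac{P(\bm p)-Q(\bm p)}{P(E)}+Q(\bm p)\tfrac{Q(E)-P(E)}{P(E)Q(E)}$, summing absolute values over $\bm p\in E$, and using $|P(E)-Q(E)|\le\delta$ together with $P(E)\ge\gamma-\delta$. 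Applied with $P=P_C$, $Q=P_\text{cat}$, $\gamma=c$, $\delta=\epsilon'$, this controls the conditioned error by $O(\epsilon'/c)$, while the abort term contributes at most the abort probability.

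Finally I would close the loop: choose $\epsilon'=\Theta(\epsilon c)=\epsilon/\text{poly}\,m$, so that the conditioning bound is at most $\epsilon/2$; and since each trial lands in $E$ with probability $P_C(E)\ge c-\epsilon'\ge1/\text{poly}\,m$, take $T=\text{poly}(m)$ large enough that the abort probability is exponentially small, hence below $\epsilon/2$. The running time is $T$ invocations of $C$ at precision $\epsilon'$, i.e.\ $\text{poly}(m,1/\epsilon')=\text{poly}(m,1/\epsilon)$, as required. I expect the only real obstacle to be the error-propagation bookkeeping of the third step---verifying that conditioning on an event of merely inverse-polynomial probability inflates the total variation distance by at most a polynomial factor, so that feeding $C$ an inverse-polynomial precision still yields the target accuracy $\epsilon$ in polynomial time; the remainder is a standard rejection-sampling analysis.
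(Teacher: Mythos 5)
Your proposal is correct and follows essentially the same route as the paper: rejection sampling on the event $\{|\bm p|=n\}$, the exact identity $P_\text{BS}=P_\text{cat}^\text{rej}$ from Lemma~\ref{lem:BScat} together with photon-number conservation, and an error-propagation bound showing that conditioning on an event of inverse-polynomial probability inflates the total variation distance by at most a polynomial factor (the paper's algebraic decomposition of $P_C(\bm p)/\sum P_C - P_\text{cat}(\bm p)/\sum P_\text{cat}$ is the mirror image of yours and yields the same $O(\delta/\gamma)$ bound, so it calls $C$ at precision $\tfrac{\epsilon}{2}\sum_{|\bm p|=n}P_\text{cat}(\bm p|n,\alpha)$ just as you call it at $\epsilon'=\Theta(\epsilon c)$). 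Your explicit handling of the abort after $T=\mathrm{poly}(m)$ failed trials is slightly more careful than the paper's parenthetical ``with probability exponentially close to $1$,'' but this is a presentational difference, not a different argument.
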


\begin{proof}[Proof of Lemma~\ref{lem:rejsamp}] Given a sampling algorithm from a probability distribution $P$ over $\mathbb N^m$, we define the following rejection sampling subroutine: sample $\bm p$ from $P$ and compute $|\bm p|$; discard the sample if $|\bm p|\neq n$; otherwise, output $\bm p$. We denote by $P^\text{rej}$ the corresponding output probability distribution. For all $\bm p\in\mathbb N^m$, we have
\begin{equation}\label{eq:subrej}
    P^\text{rej}(\bm p)=\frac{\delta_{|\bm p|,n}P(\bm p)}{\sum_{|\bm q|=n}P(\bm q)},
\end{equation}
where $\delta$ is the Kronecker symbol.
Applying this subroutine to $P_\text{cat}$ gives a new probability distribution
\begin{equation}\label{eq:Pcatrej}
    P_\text{cat}^\text{rej}(\bm p|n,\alpha)=\frac{\delta_{|\bm p|,n}P_\text{cat}(\bm p|n,\alpha)}{\sum_{|\bm q|=n}P_\text{cat}(\bm q|n,\alpha)}.
\end{equation}
By Lemma~\ref{lem:BScat} (see Eq.~(\ref{eq:BScat})) we have, for all $\bm p$ such that $|\bm p|=n$,
\begin{equation}\label{eq:correspBScat}
    P_\text{cat}(\bm p|n,\alpha)=\frac{|\alpha|^{2n}}{\sinh^n(|\alpha|^2)}P_\text{BS}(\bm p|n).
\end{equation}
Hence

\begin{equation}\label{eq:fractioncat}
    \begin{aligned}
        \sum_{|\bm p|=n}P_\text{cat}(\bm p|n,\alpha)&=\frac{|\alpha|^{2n}}{\sinh^n(|\alpha|^2)}\sum_{|\bm p|=n}P_\text{BS}(\bm p|n)\\
        &=\frac{|\alpha|^{2n}}{\sinh^n(|\alpha|^2)},
    \end{aligned}
\end{equation}
where we used $\sum_{|\bm p|=n}P_\text{BS}(\bm p|n)=1$ in the second line since $\hat U$ conserves the total number of photons~(\ref{eq:quantum4}). Combining Eqs.~(\ref{eq:Pcatrej}-\ref{eq:fractioncat}), we thus obtain
\begin{equation}\label{eq:catrejeqBS}
    P_\text{BS}(\bm p|n)=P_\text{cat}^\text{rej}(\bm p|n,\alpha).
\end{equation}
This reproduces the argument for exact sampling hardness from~\cite{rohde2015evidence}.

To show approximate hardness, we define the classical approximate sampling algorithm $\tilde C$ by running the algorithm $C$ with input error bound $\frac\epsilon2\sum_{|\bm p|=n}P_\text{cat}(\bm p|n,\alpha)$ and applying the rejection sampling subroutine. By assumption, running the algorithm $C$ can be done in time
\begin{equation}
    \text{poly}\left(m,\frac2{\epsilon\sum_{|\bm p|=n}P_\text{cat}(\bm p|n,\alpha)}\right)=\text{poly}\left(m,\frac1\epsilon\right),
\end{equation}
since $\sum_{|\bm p|=n}P_\text{cat}(\bm p|n,\alpha)\ge\frac1{\text{poly }m}$, also by assumption. On the other hand, the rejection sampling subroutine induces a computational overhead scaling with the inverse of the fraction of kept samples $\sum_{|\bm q|=n}P_C(\bm q)$. We have
\begin{equation}
    \begin{aligned}
        \sum_{|\bm q|=n}P_\text{cat}(\bm q|n,\alpha)-\sum_{|\bm q|=n}P_C(\bm q)&\le\left|\sum_{|\bm q|=n}P_C(\bm q)-\sum_{|\bm q|=n}P_\text{cat}(\bm q|n,\alpha)\right|\\
        &\le\sum_{|\bm q|=n}\left|P_C(\bm q)-P_\text{cat}(\bm q|n,\alpha)\right|\\
        &\le\sum_{\bm q\in\mathbb N^m}\left|P_C(\bm q)-P_\text{cat}(\bm q|n,\alpha)\right|\\
        &=2\|P_C-P_\text{cat}\|_\text{TV}\\
        &\le\epsilon\sum_{|\bm p|=n}P_\text{cat}(\bm p|n,\alpha),
    \end{aligned}
\end{equation}
where we used the triangle inequality in the second line, the definition of the total variation distance in the fourth line, and the definition of $C$ in the last line. Hence,
\begin{equation}
    \begin{aligned}
        \sum_{|\bm q|=n}P_C(\bm q)&\ge(1-\epsilon)\sum_{|\bm q|=n}P_\text{cat}(\bm q|n,\alpha)\\
        &\ge\frac{1-\epsilon}{\text{poly }m},
    \end{aligned}
\end{equation}
so that the computational overhead induced by the rejection sampling subroutine scales as $\text{poly}(m,\frac1\epsilon)$.

Overall, the classical algorithm $\tilde C$ outputs samples from the probability distribution $P_C^\text{rej}$ in $\text{poly}(m,\frac1\epsilon)$ time (with probability exponentially close to $1$). Moreover,
\begin{align}
        \nonumber\|P_C^\text{rej}-P_\text{BS}\|_\text{TV}&=\frac12\sum_{\bm p\in\mathbb N^m}|P_C^\text{rej}(\bm p)-P_\text{BS}(\bm p|n)|\\
        \nonumber&=\frac12\sum_{|\bm p|=n}|P_C^\text{rej}(\bm p)-P_\text{BS}(\bm p|n)|\\
        \nonumber&=\frac12\sum_{|\bm p|=n}\left|P_C^\text{rej}(\bm p)-P_\text{cat}^\text{rej}(\bm p|n,\alpha)\right|\\
        \nonumber&=\frac12\sum_{|\bm p|=n}\left|\frac{P_C(\bm p)}{\sum_{|\bm q|=n}P_C(\bm q)}-\frac{P_\text{cat}(\bm p|n,\alpha)}{\sum_{|\bm q|=n}P_\text{cat}(\bm q|n,\alpha)}\right|\\
        &\le\frac12\sum_{|\bm p|=n}\Bigg[\frac{\left|P_C(\bm p)-P_\text{cat}(\bm p|n,\alpha)\right|}{\sum_{|\bm q|=n}P_\text{cat}(\bm q|n,\alpha)}\\
        \nonumber&\quad\quad\quad\quad\quad\quad\quad\quad\quad+P_C(\bm p)\left|\frac1{\sum_{|\bm q|=n}P_C(\bm q)}-\frac1{\sum_{|\bm q|=n}P_\text{cat}(\bm q|n,\alpha)}\right|\Bigg]\\
        \nonumber&\le\frac1{2\sum_{|\bm q|=n}P_\text{cat}(\bm q|n,\alpha)}\sum_{|\bm p|=n}\Bigg[\left|P_C(\bm p)-P_\text{cat}(\bm p|n,\alpha)\right|\\
        \nonumber&\quad\quad\quad\quad\quad\quad\quad\quad\quad+\frac{P_C(\bm p)}{\sum_{|\bm q|=n}P_C(\bm q)}\left|\sum_{|\bm q|=n}P_\text{cat}(\bm q|n,\alpha)-\sum_{|\bm q|=n}P_C(\bm q)\right|\Bigg]\displaybreak\\
        \nonumber&\le\frac1{\sum_{|\bm q|=n}P_\text{cat}(\bm q|n,\alpha)}\Bigg[\frac12\sum_{\bm p\in\mathbb N^m}\left|P_C(\bm p)-P_\text{cat}(\bm p|n,\alpha)\right|\\
        \nonumber&\quad\quad\quad\quad\quad\quad\quad\quad\quad+\sum_{|\bm p|=n}\frac{P_C(\bm p)}{\sum_{|\bm q|=n}P_C(\bm q)}\frac12\sum_{\bm q\in\mathbb N^m}\left|P_\text{cat}(\bm q|n,\alpha)-P_C(\bm q)\right|\Bigg]\\
        \nonumber&=\frac{2\|P_C-P_\text{cat}\|_\text{TV}}{\sum_{|\bm q|=n}P_\text{cat}(\bm q|n,\alpha)}\\
        \nonumber&\le\epsilon.
\end{align}
where we used the definition of the total variation distance in the first and eighth lines, the fact that $P_C^\text{rej}$ and $P_\text{BS}$ are supported on samples with total photon number $n$ in the second line, Eq.~(\ref{eq:catrejeqBS}) in the third line, Eq.~(\ref{eq:subrej}) in the fourth line and the triangle inequality in the fifth line.

Hence, $\tilde C$ is an efficient classical algorithm for approximate sampling from $P_\text{BS}$.
\end{proof}

\noindent We are left with identifying the regime of input cat state amplitude $\alpha\in\mathbb C$ such that the assumption in Lemma~\ref{lem:rejsamp} is satisfied, i.e.\ $\sum_{|\bm p|=n}P_\text{cat}(\bm p|n,\alpha)\ge\frac1{\text{poly }m}$.

\begin{lemma}\label{lem:amplitudecat}
For $0<|\alpha|=\mathcal O(n^{-1/4}\log^{1/4}m)$ and $m=\text{poly }n$, the detection events satisfying $|\bm p|=n$ represent at least an inverse-polynomial fraction of the outcomes for Boson Sampling with input cat states of amplitude $\alpha$, i.e.\ $\sum_{|\bm p|=n}P_\text{cat}(\bm p|n,\alpha)\ge\frac1{\text{poly }m}$.
\end{lemma}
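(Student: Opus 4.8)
The plan is to reduce the claim to a purely scalar estimate, since Lemma~\ref{lem:BScat} has already supplied a closed form for the quantity of interest. Indeed, Eq.~(\ref{eq:fractioncat}) established that
\begin{equation}
    \sum_{|\bm p|=n}P_\text{cat}(\bm p|n,\alpha)=\frac{|\alpha|^{2n}}{\sinh^n(|\alpha|^2)}=\left(\frac{t}{\sinh t}\right)^{\!n},
\end{equation}
where I set $t:=|\alpha|^2>0$. So everything reduces to lower-bounding the single-variable expression $(t/\sinh t)^n$ and tracking its behaviour under the prescribed amplitude scaling.

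First I would control $\sinh t/t$ from above. The cleanest route is the Weierstrass product $\frac{\sinh t}{t}=\prod_{k\ge1}\bigl(1+\frac{t^2}{k^2\pi^2}\bigr)$; taking logarithms and using $\log(1+x)\le x$ together with $\sum_{k\ge1}k^{-2}=\pi^2/6$ yields $\log\frac{\sinh t}{t}\le\frac{t^2}{\pi^2}\sum_{k\ge1}\frac1{k^2}=\frac{t^2}{6}$, valid for all $t>0$. Equivalently one may just invoke the Taylor estimate $\sinh t\le t\,e^{t^2/6}$ on the relevant bounded range of $t$. Either way this gives $\frac{t}{\sinh t}\ge e^{-t^2/6}$, hence
\begin{equation}
    \sum_{|\bm p|=n}P_\text{cat}(\bm p|n,\alpha)\ge e^{-nt^2/6}.
\end{equation}

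Finally I would substitute the hypothesis $|\alpha|=\mathcal O(n^{-1/4}\log^{1/4}m)$, which is exactly $t=|\alpha|^2=\mathcal O(n^{-1/2}\log^{1/2}m)$ and therefore $t^2=\mathcal O(n^{-1}\log m)$. Plugging in gives $nt^2=\mathcal O(\log m)$, so $e^{-nt^2/6}=e^{-\mathcal O(\log m)}=m^{-\mathcal O(1)}=\frac1{\text{poly }m}$, which is the desired conclusion; the side hypothesis $m=\text{poly }n$ only serves to keep $\log m=\Theta(\log n)$ so that ``$\text{poly }m$'' and ``$\text{poly }n$'' are interchangeable. There is no genuine obstacle here: the content is entirely in the scalar inequality $\sinh t\le t\,e^{t^2/6}$, and the only point requiring care is checking that the amplitude bound is chosen precisely so that $nt^2$ grows no faster than logarithmically in $m$. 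Any smaller choice of $|\alpha|$ would only make the retained fraction larger and thus work as well, whereas $nt^2$ growing super-logarithmically would make that fraction super-polynomially small and break the rejection-sampling argument of Lemma~\ref{lem:rejsamp}---so the stated threshold is exactly the boundary of the usable regime.
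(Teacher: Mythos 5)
Your proof is correct and follows essentially the same route as the paper: both start from the closed form $\sum_{|\bm p|=n}P_\text{cat}(\bm p|n,\alpha)=|\alpha|^{2n}/\sinh^n(|\alpha|^2)$ of Eq.~(\ref{eq:fractioncat}), bound $\sinh t$ by $t$ times an exponentially small correction, and observe that $n|\alpha|^4=\mathcal O(\log m)$ under the stated amplitude scaling. The only difference is cosmetic but welcome: your global inequality $\sinh t\le t\,e^{t^2/6}$ (via the Weierstrass product) replaces the paper's asymptotic Taylor expansion $e^{-\frac16 n|\alpha|^4+O(n|\alpha|^8)}$, giving a clean non-asymptotic lower bound valid for all $t>0$.
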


\begin{proof}[Proof of Lemma~\ref{lem:amplitudecat}]
Let $\alpha\in\mathbb C$ such that $0<|\alpha|=\mathcal O(n^{-1/4}\log^{1/4}m)$. The fraction of outcomes $\bm p\in\mathbb N^m$ for $m$-mode Boson Sampling with $n$ input cat states of amplitude $\alpha$ is given by
\begin{equation}
    \begin{aligned}
        \sum_{|\bm p|=n}P_\text{cat}(\bm p|n,\alpha)&=\frac{|\alpha|^{2n}}{\sinh^n(|\alpha|^2)}\\
        &=e^{n\log(|\alpha|^2)-n\log(\sinh|\alpha|^2)}\\
        &=e^{n\log(|\alpha|^2)-n\log(|\alpha|^2+\frac16|\alpha|^6+O(|\alpha|^{10}))}\\
        &=e^{-n\log(1+\frac16|\alpha|^4+O(|\alpha|^8))}\\
        &=e^{-\frac16n|\alpha|^4+O(n|\alpha|^8)}\\
        &=e^{-O(\log m)+O(n^{-2}\log^2m)}\\
        &\ge\frac1{\text{poly }m},
    \end{aligned}
\end{equation}
where we used Eq.~(\ref{eq:fractioncat}) in the first line, the scaling of $|\alpha|$ in the sixth line, and $O(n^{-2}\log^2m)=O(1)$ in the last line.
\end{proof}

\noindent Combining Lemma~\ref{lem:rejsamp} and Lemma~\ref{lem:amplitudecat}, using any efficient classical algorithm for approximate sampling from the output probability distribution of a Boson Sampling instance with input cat states with $0<|\alpha|=\mathcal O(n^{-1/4}\log^{1/4}n)$, one may also efficiently sample approximately from the output probability distribution of the corresponding Boson Sampling instance with input single photons efficiently, by keeping only the samples $\bm p$ satisfying $|\bm p|=n$. 

This proves that Boson Sampling with input cat states with $0<|\alpha|=\mathcal O(n^{-1/4}\log^{1/4}n)$ is hard to sample approximately in the same regime as Boson Sampling with input Fock states, assuming the same conjectures as in~\cite{Aaronson2013}. 
\end{proof}

\section*{Acknowledgments}

U.~C.~thanks Atul Singh Arora and Pierre-Emmanuel Emeriau for interesting discussions.
U.~C, A.~D, and S.~M.~acknowledge funding provided by the Institute for Quantum Information and Matter, an NSF Physics Frontiers Center (NSF Grant PHY-1733907).
A.~D.~also acknowledges support from the National Science Foundation RAISE-TAQS 1839204 and Amazon Web Services, AWS Quantum Program.

\bibliographystyle{linksen}
\bibliography{biblio}

\end{document}